\newtheorem{definition}{Definition}[section]
\newtheorem{lemma}[definition]{Lemma}
\newtheorem{proposition}[definition]{Proposition}
\newtheorem{theorem}[definition]{Theorem}
\numberwithin{equation}{section}
\newcommand{\rk}{\text{r}}
\newcommand{\pk}{\text{p}}
\newcommand{\kk}{\text{k}}
\newcommand{\rd}{\text{d}}
\newcommand{\psit}{\Psi_{N,t}}
\newcommand{\qN}{q_{N,t}}
\newcommand{\vN}{\varphi_{N,t}}
\begin{document}

\title{Central limit theorem for Bose gases interacting through singular potentials}

\author{Simone Rademacher \\
\\
Institute of Mathematics, University of Zurich, \\
Winterthurerstrasse 190, 8057 Zurich, 
Switzerland, \\
current adress: IST Austria, Am Campus 1, 3400 Klosterneuburg, Austria}

\maketitle

\maketitle

\begin{abstract}
We consider a system of $N$ bosons in the limit $N \rightarrow \infty$, interacting through singular potentials. For initial data exhibiting Bose-Einstein condensation, the many-body time evolution is well approximated through a quadratic fluctuation dynamics around a cubic non-linear Schr\"odinger equation of the condensate wave function. We show that these fluctuations satisfy a (multi-variate) central limit theorem. 
\end{abstract}

\section{Introduction}

We consider a system of $N$ bosons with Hamilton operator 
\begin{align}
\label{eq:HNbeta}
H_N = \sum_{j=1}^N - \Delta_{x_j} + \frac{1}{N} \sum_{1 \leq j < k \leq N} V_N (x_j-x_k)
\end{align} 
acting on $L^2_s( \mathbb{R}^{3N}) $, the subspace of $L^2( \mathbb{R}^{3N})$ consisting of functions which are symmetric with respect to permutations. The $N$-dependent two-body interaction potential is given through
\begin{align*}
V_N(x) = N^{3 \beta} V( N^\beta x).
\end{align*}
In the following, we assume $V \geq 0$ to be smooth, spherically symmetric and compactly supported. For $\beta =0$, the Hamiltonian \eqref{eq:HNbeta} describes the mean-field regime characterized by a large number of weak collisions, whereas for $\beta >1/3$ the collisions of the particles are rare but strong. In the Gross-Pitaevskii regime ($\beta =1$), pair correlations play a crucial role. Here, we study intermediate regimes $\beta \in (0,1)$ in the limit $N \rightarrow \infty$ where the particles interact through singular potentials.

The time evolution is governed by the Schr\"odinger equation
\begin{align}
\label{eq:Schroe}
i \partial_t \psi_{N,t} = H_N \psi_{N,t}.
\end{align} 
For $\beta =0$ (mean-field regime), the solution of \eqref{eq:Schroe} can be approximated by products of solutions of the Hartree equation
\begin{align*}
i \partial_t \varphi_t = - \Delta \varphi_t + \left( V* | \varphi_t|^2 \right) \varphi_t
\end{align*}
with initial data $\varphi_0 \in L^2( \mathbb{R}^3)$. See for example \cite{AGT,AFP,AN,AH,BGM,BCS,ES,EY,FKP,FKS,KSS,KP,Spohn_Kinetic}. For $0 < \beta \leq 1$, on the other hand, the solution $\psi_{N,t}$ of \eqref{eq:Schroe} can be approximated by the non-linear Schr\"odinger equation 
\begin{align}
\label{eq:nonlinear Schroe}
i \partial_t \varphi_t = - \Delta \varphi_t + \sigma \varphi_t 
\end{align}
with $\sigma = \widehat{V}(0)$ if $\beta <1$ and $\sigma = 8 \pi \mathfrak{a}_0$ if $\beta =1$ (Gross-Pitaevskii regime). Hereafter, $\mathfrak{a}_0$ denotes the scattering length associated with the potential $V$ defined through the solution of the zero-energy scattering equation
\begin{align}
\left[ - \Delta +\frac{1}{2} V \right] f = 0
\end{align}
with boundary condition $f(x) \rightarrow 1$ as $|x| \rightarrow \infty$. Then, outside the support of $V$, the solution $f$ is given through
\begin{align}
f(x) =1 - \mathfrak{a}_0/|x|,
\end{align}
where $\mathfrak{a}_0$ is defined as the scattering length of the potential $V$. In \cite{ESY_beta_einhalb,ESY_GP_rig,ESY_GP} it is shown that if the one-particle reduced density $\gamma_N$ associated with $\psi_N$ satisfies
\begin{align*}
\gamma_N \rightarrow | \varphi_0 \rangle \langle \varphi_0 |
\end{align*}
in the trace norm topology and
\begin{align}
\label{eq:energycond}
\langle \psi_N, H_N \psi_N \rangle \leq CN,
\end{align}
then the one particle reduced density $\gamma_{N,t}$ associated with the solution $\psi_{N,t}$ of \eqref{eq:Schroe} obeys
\begin{align}
\label{eq:BEC time}
\gamma_{N,t} \rightarrow | \varphi_{t} \rangle \langle \varphi_t |
\end{align}
where $\varphi_t$ denotes the solution of \eqref{eq:nonlinear Schroe}. In fact, in \cite{ESY_beta_einhalb} considering the case $\beta <1$, the energy condition \eqref{eq:energycond} on the initial data is not needed. For more results in the Gross-Pitaevskii regime see \cite{BOS,BS,CH_GP,P_GP_1,P_extPot}. An overview on the derivation of the non-linear Schr\"odinger equation from many-body quantum dynamics is given in \cite{BPS,Golse,Schlein}. \\

\subsection{Norm approximation}

Besides the convergence of the one-particle reduced density $\gamma_{N,t}$ associated with $\psi_{N,t}$, the norm approximation of $\psi_{N,t}$ has been studied for different settings of $\beta \in (0,1)$ in \cite{BNNS,GM,NM_betaeindrittel,NM_betaeinhalb}. Our result is based on the norm approximation obtained in \cite{BNNS} covering $\beta <1$ whose ideas we explain in the following. \\


\textit{Truncated Fock space.}  As first step towards the norm approximation in \cite{BNNS}, the contribution of the Bose-Einstein condensate is factored out. This is realized through the unitary $\mathcal{U}_{\varphi_{N,t}} : L_s \left( \mathbb{R}^{3N} \right) \rightarrow \mathcal{F}_{\perp \varphi_{N,t}}^{\leq N}$. It maps the $N$-particle sector of the bosonic Fock space
$$
\mathcal{F} = \bigoplus_{n \geq 0} L_s \left( \mathbb{R}^{3n} \right)
$$ 
into the truncated Fock space 
\begin{align*}
 \mathcal{F}_{\perp \varphi_{N,t}}^{\leq N} = \bigoplus_{n = 0}^N L^2_{\perp \varphi_{N,t}} \left( \mathbb{R}^{3} \right)^{\otimes_s n}          
\end{align*}
defined over the orthogonal complement $L^2_{\perp \varphi_{N,t}}\left( \mathbb{R}^3\right)$ of the subspace of $L^2( \mathbb{R}^3)$ spanned by the condensate wave function $\varphi_{N,t}$.  This unitary has first been used in \cite{LNSS} in the mean-field regime. Its definition is based on the observation that every $\psi_N \in L^2_s ( \mathbb{R}^{3N}) $ has a unique decomposition
\begin{align*}
\psi_N = \sum_{n=0}^N \alpha^{(n)} \otimes_s \varphi_{N,t}^{N-n}
 \end{align*}
where $\alpha^{(n)} \in L^2_{\perp \varphi_{N,t}} ( \mathbb{R}^3)^{\otimes_s n}$ for all $n=1, \cdots, N$. Then,
\begin{align*}
\mathcal{U}_{ \varphi_{N,t}} \psi_N = \lbrace \alpha^{(0)}, \alpha^{(1)}, \cdots, \alpha^{(n)} \rbrace .
\end{align*}
This unitary satisfies the following properties proven in \cite{LNSS}
\begin{align}
\label{eq:propU}
\mathcal{U}_{\varphi_{N,t}} a^*( \varphi_{N,t} ) a( \varphi_{N,t} ) \mathcal{U}_{ \varphi_{N,t}}^* =& N - \mathcal{N}_+(t)\notag \\
\mathcal{U}_{ \varphi_{N,t}} a^*( \varphi_{N,t} ) a( f ) \mathcal{U}_{ \varphi_{N,t}}^* =& \sqrt{N - \mathcal{N}_+(t)} a(f) \notag\\
\mathcal{U}_{ \varphi_{N,t}} a^*( f ) a( \varphi_{N,t} ) \mathcal{U}_{ \varphi_{N,t}}^* =& a^*(f) \sqrt{N - \mathcal{N}_+(t)}  \notag \\
\mathcal{U}_{ \varphi_{N,t}} a^*( f ) a( g ) \mathcal{U}_{ \varphi_{N,t}}^* =& a^*(f) a(g)
\end{align}
for all $f,g \in L^2_{\perp \varphi_{N,t}} ( \mathbb{R}^3)$. Here $a^*(f), a(f)$ denote the standard creation and annihilation operators on the bosonic Fock space $\mathcal{F}$. On the truncated Fock space, we define modified creation and annihilation operators
\begin{align}
\label{eq:modified c/a}
 b^*(f) = a^*(f) \sqrt{\frac{N-\mathcal{N}_+(t)}{N}}, \hspace{0.3cm}b(f)= \sqrt{\frac{N-\mathcal{N}_+(t)}{N}} a(f). 
\end{align}
The modified creation operator $b^*(f)$ excites one particle from the condensate into its complement while $b(f)$ annihilates an excitation into the condensate. We define the vector $ \xi_{N,t} :=  \mathcal{U}_{\varphi_{N,t}}  \psi_{N,t}$ representing the fluctuation outside the condensate and observe 
\begin{align}
\label{eq:def_Xi}
i \partial_t \xi_{N,t} = \mathcal{L}_{N,t} \;  \xi_{N,t}, \hspace{0.3cm } \text{with} \hspace{0.3cm} \mathcal{L}_{N,t} = \mathcal{U}_{\varphi_{N,t}} H_N \mathcal{U}_{\varphi_{N,t}}^* + \left( i \partial_{t} \mathcal{U}_{\varphi_{N,t}} \right) \mathcal{U}_{\varphi_{N,t}}^*
\end{align}
with initial data $\xi_{N,0} = \mathcal{U}_{\varphi_{N,0}} \psi_{N,0}$. \\


\textit{From the truncated Fock space to the bosonic Fock space.} We approximate the generator $\mathcal{L}_{N,t}$ acting on the truncated Fock space only with a modified generator $\widetilde{\mathcal{L}}_{N,t}$ defined on the whole bosonic Fock space. We consider regimes with a small number of excitations $\mathcal{N}_+(t)$. For this reason, we realize the approximation of  $\mathcal{L}_{N,t}$ through $\widetilde{\mathcal{L}}_{N,t}$ by replacing $\sqrt{N-\mathcal{N}_+(t)}$ with $\sqrt{N} G_M(\mathcal{N}_+(t) /N)$, where
\begin{align*}
G_M(\tau) := \sum_{n=0}^M \frac{(2n)!}{(n!)^2 4^n (1-2n)} t^n 
\end{align*}
is the $M$-th Taylor polynom of $\sqrt{1-\tau}$ expanded at the point $\tau_0 =0$. For a precise definition see \cite[eq. (54)]{BNNS}. 
%
%
 \\

\textit{Correlation structure through Bogoliubov transformation.}  In the intermediate regime correlations are important (at least if $\beta >1/2$). For their implementation, we consider for fixed $\ell >0$ the ground state of the scattering equation 
\begin{align}
\label{eq:scattering_eq}
\left[ - \Delta + \frac{1}{2N} V_N \right] f_N = \lambda_N f_N
\end{align}
with Neumann boundary conditions on the ball $B_\ell (0) $. We fix $f_N(x) = 1$ for all $|x| =\ell$ and extend $f_N$ to $\mathbb{R}^3$ by setting $f_N(x) = 1$ for all $|x| \geq \ell$.

In \cite{BCS}, the non linear Schr\"odinger equation \eqref{eq:nonlinear Schroe} is replaced by the $N$-dependent Hartree equation 
\begin{align}
\label{eq:modified Hartree}
i \partial_t \varphi_{N,t} = - \Delta \varphi_{N,t} + ( V_Nf_N * | \varphi_{N,t}|^2) \varphi_{N,t}
\end{align}
with initial data $\varphi_{N,0} = \varphi_0$ (the condensate wave function at time $t=0$) to approximate the time evolved condensate wave function. The well-posedness of \eqref{eq:modified Hartree} is shown in \cite[Appendix B]{BCS}.

The correlation structure is implemented through the Bogoliubov transformation 
\begin{align}
\label{eq:def_bogo}
T_{N,t} = \exp \left( \frac{1}{2} \int dxdy \; \left[ \eta_{N,t} (x,y) a_xa_y - h.c. \right] \right) 
\end{align}
where $\eta_{N,t}$ denotes the Hilbert-Schmidt operator with integral kernel
\begin{align*}
\eta_{N,t} (x;y ) = - (q_{N,t} \otimes q_{N,t} )N \omega_N ( x-y) \varphi_{N,t}^2(( x+y)/2).
\end{align*}
Here, $\omega_N = 1- f_N$ and $\varphi_{N,t}$ are as defined in \eqref{eq:scattering_eq} resp. \eqref{eq:modified Hartree} and $q_{N,t} = 1- | \varphi_{N,t} \rangle \langle \varphi_{N,t} |$. The Bogoliubov transformation acts on the creation and annihilation operators as
\begin{align}
\label{eq:action_bogo}
T_{N,t} \; a(f) T_{N,t}^* =& a \left( \cosh_{\eta_{N,t}}  (f) \right) + a^* \left( \sinh_{ \eta_{N,t}} \overline{f})\right) \notag\\
T_{N,t} \; a^*(f) T_{N,t}^* =& a^* \left( \cosh_{\eta_{N,t}} (f) \right) + a \left( \sinh_{\eta_{N,t}} ( \overline{f})\right)
\end{align}
for all $f \in L^2( \mathbb{R}^3)$.  The operators $\sinh_{\eta_{N,t}}$ and $\cosh_{\eta_{N,t}}$ are defined through the absolutely convergent series of products of the operator $\eta_{N,t}$
\begin{align}
\label{eq:def_cosh}
\cosh_{\eta_{N,t}} = \sum_{n \geq 0} \frac{1}{(2n)!} \left( \eta_{N,t} \overline{\eta}_{N,t} \right)^n, \hspace{0.3cm} \sinh_{\eta_{N,t}} = \sum_{n \geq 0} \frac{1}{(2n+1)!} \left( \eta_{N,t} \overline{\eta}_{N,t}\right)^n \eta_{N,t} . 
\end{align}
Let $\mathcal{G}_{N,t}$ be the generator given through
\begin{align}
\label{eq:def_GN}
\mathcal{G}_{N,t} = \left( i \partial_t T_{N,t}\right) T_{N,t}^* + T_{N,t} \widetilde{\mathcal{L}}_{N,t} T_{N,t}^* .
\end{align}
In fact, the special choice of \eqref{eq:scattering_eq} and \eqref{eq:modified Hartree} allow crucial cancellations in the generator $\mathcal{G}_{N,t}$. Note that  $\mathcal{G}_{N,t}$ consists of terms which are quadratic in creation and annihilation operators and of terms of higher order. Nevertheless, in \cite[Lemma 5]{BNNS} it is shown that $\mathcal{G}_{N,t}$ can be approximated through the generator $\mathcal{G}_{2,N,t}$ containing quadratic terms only. \\


\textit{Limiting quadratic dynamics.} We are interested in the limit $N\rightarrow \infty$ of $\mathcal{G}_{2,N,t}$ defined in \eqref{eq:def_GN}. In order to replace the Bogoliubov transformation $T_{N,t}$ defined in \eqref{eq:def_bogo} with a limiting one, we define the limiting kernel $\omega_\infty$ of $\omega_N$ through
\begin{align}
\label{eq:def_omega_infty}
\omega_\infty (x) = \frac{\mathfrak{b}_0}{8 \pi} \left[ \frac{1}{|x|} - \frac{3}{2\ell} + \frac{x^2}{3 \ell^3}\right]
\end{align}
for $|x| \leq \ell$ and $\omega_\infty (x) =0$ otherwise. Here, we used the notation $\mathfrak{b}_0 = \int dx \; V(x) $. 

%
%

Furthermore, the solution $\varphi_{N,t}$ of the modified Hartree equation \eqref{eq:modified Hartree} with initial data $\varphi_0 \in H^4 ( \mathbb{R}^3)$ can be approximated with the solution $\varphi_t$ of \eqref{eq:nonlinear Schroe} with $\sigma= \widehat{V}(0)$ and with initial data $\varphi_0$. To be more precise, \cite[Proposition B.1]{BCS} shows that there exists a constant $C>0$ (depending on $\| \varphi_0 \|_{H^4}$) such that 
\begin{align}
\label{eq:approx phi}
\| \varphi_t - \varphi_{N,t} \|_2 \leq CN^{-\gamma} \exp \left( C \exp \left( C |t| \right) \right)
\end{align}
with $\gamma = \min \lbrace \beta, 1-\beta \rbrace$. Standard arguments (see for example \cite[Proposition B.1]{BCS})  imply, that there exists a constant $C>0$ such that  
\begin{align}
\label{eq:regularity phi}
\| \varphi_t \|_2 \leq C, \hspace{0.3cm} \| \varphi_t \|_\infty \leq C \exp ( C|t|)), \hspace{0.3cm} \| \varphi_t \|_{H^n} \leq C \exp ( C |t|))
\end{align}
for all $n \in \mathbb{N}$.
The approximations \eqref{eq:def_omega_infty} and \eqref{eq:approx phi} lead to a limiting kernel 
\begin{align}
\label{eq:def eta}
\eta_t (x;y) = - \left( q_t \otimes q_t\right) \omega_\infty (x-y) \varphi_t^2 ( (x+y)/2).
\end{align}
We define the limiting Bogoliubov transformation 
\begin{align}
\label{eq:def bogo limiting}
T_{t} = \exp \left( \frac{1}{2} \int dxdy \; \left[ \eta_{t} (x,y) a_xa_y - h.c. \right] \right).
\end{align}
In fact, \eqref{eq:def_omega_infty} and \eqref{eq:approx phi} yield that there exists a constant $C>0$ such that
\begin{align}
\label{eq:approx eta}
\| \eta_{N,t} - \eta_t \|_2 \leq CN^{-\gamma} \exp \left( C \exp \left( C |t| \right) \right)
\end{align}
where $\gamma = \min \lbrace \beta, 1- \beta \rbrace$.

In order to define the limiting dynamics we introduce some more notation. We use the shorthand notation $j_x( \cdot) = j(\cdot, x)$ for any $j \in L^2( \mathbb{R}^3 \times \mathbb{R}^3)$.  Furthermore, we decompose $\text{sh}_{\eta_t} = \eta_t + \rk_t $,  $\text{ch}_{\eta_t} = \mathds{1} + \pk_t$ and
\begin{align*}
\eta_{t} (x;y) 
=-\omega_\infty (x-y) \varphi_{t}^2 ((x+y)/2)  + \mu_t (x;y) 
= k_{t}(x;y) + \mu_{t}(x;y)
\end{align*}
for all $x,y \in \mathbb{R}^3$.

A slight modification of the arguments in \cite[Appendix C]{BCS} shows some properties of the kernels. For these, we consider initial data  $\varphi_0 \in H^4 ( \mathbb{R}^3)$ of \eqref{eq:nonlinear Schroe}. There exist a constant $C>0$ (depending only on $\| \varphi_{0}\|_{H^4( \mathbb{R}^3)}$ and on $V$) such that on one hand
\begin{align}
\label{eq:estimates_k,sh,p,r,mu}
\| \text{ch}_{\eta_t} \| \leq C, \hspace{0.3cm} \text{and} \hspace{0.3cm} \; \| k_{t} \|_2, \; \| \eta_{t} \|_2, \; \| \text{sh}_{\eta_t} \|_2,\;  \| \pk_t \|_2,\;  \| \rk_t \|_2, \; \| \mu_{t} \|_2 \leq C
\end{align}
where $\| \cdot \|$ denotes the operator norm. On the other hand, denoting with $\nabla_1 k_t $ and $\nabla_2 k_t$ the operator with the kernel $\nabla_x k_{t} (x;y)$ 
\begin{align}
\label{eq:estimates_nabla_eta}
 \| \partial_t \eta_{t} \|_2 \leq C e^{C|t|}, \hspace{0.2cm}  \max \left\lbrace \sup_{x} \int dz \; \vert \nabla_1 k_t (x;z) \vert , \; \sup_{y} \int dz \; \vert \nabla_1 k_t (z;y) \vert \right\rbrace \leq C.
\end{align}
Furthermore, let $\Delta_1 \rk_t $ resp. $\Delta_2 \rk_t $ be the operator having the kernel $\Delta_x \rk_t(x;y)$ resp. $\Delta_y \rk_t (x;y)$, then for all $i =1,2$
\begin{align}
\label{eq:estimates_delta_eta}
 \| \Delta_i \rk_t \|_2, \; \| \Delta_i \pk_t \|_2, \; \| \Delta_i \mu_{t} \|_2\leq C e^{C |t|} .
\end{align}
 In order to simplify notation, we write in the following $\text{sh}_{\eta_{t}} = \text{sh}, \; \text{ch}_{\eta_t} = \text{ch}$ resp. $\rk_t = \rk, \; \kk_t = \kk, \; \pk_t = \pk $.
 \\

\begin{definition} 
\label{def:U2}
We define the \textit{limiting dynamics} $\mathcal{U}_2( t;s)$ satisfying 
\begin{align}
\label{eq:def_U2}
i \partial_t \mathcal{U}_2 (t;s) = \mathcal{G}_2 (t) \mathcal{U}_2( t; s) \hspace{0.3cm} \text{and} \hspace{0.3cm} \mathcal{U}_2 (s;s) = \mathds{1}
\end{align}
where $\mathcal{G}_{2,t}$ is given by
\begin{align}
\label{eq:def_G2}
\mathcal{G}_{2,t} := \left( i \partial_t T_{t}\right) T_{t}^* + \mathcal{G}_{2,t}^\mathcal{V} + \mathcal{G}_{2,t}^\mathcal{K} + \mathcal{G}_{2,t}^{\lambda}
\end{align}
with
\begin{align}
\label{eq:def_G2V}
\mathcal{G}_{2,t}^{\mathcal{V}} =& \mathfrak{b}_0 \int dx \;  | \varphi_t (x)|^2 \;  \left[ a^* (\text{ch}_x ) a( \text{ch}_x)+a^*( \text{sh}_y) a( \text{sh}_x)  \right. \notag \\
& \hspace{4cm} \left. + a^*( \text{ch}_x) a^*(\text{sh}_x) +a(\text{ch}_x) a(\text{sh}_x)  \right]\notag \\
&+ \int dxdy \; K_{1,t}(x;y) \left[ a^*( \text{ch}_x) a(\text{ch}_y) + a^*( \text{sh}_x) a( \text{sh}_y) \right. \notag \\
& \left. \hspace{4cm}+ a^*( \text{ch}_x) a^* ( \text{sh}_y) + a( \text{ch}_y) a( \text{sh}_x)  \right] \notag\\
&+ \int dxdy \; K_{2,t}(x;y)\left[ a^*( \text{ch}_x) a( \text{sh}_y)+  a^*( \text{ch}_y) a( \text{sh}_x)\right.\notag \\
& \hspace{4cm} \left.+  a^* ( \text{ch}_x) a^*( \text{ch}_y) +  a(\text{sh}_x) a(\text{sh}_y) + h.c. \right] \notag\\
&+ \frac{1}{2} \left[ \| \varphi_t^2 \|_2^2 \; a^*( \varphi_t) a( \varphi_t) - 2a^*( \varphi_t) a( |\varphi_t|^2 \varphi_t) +h.c. \right] \notag\\
=& \sum_{i=1}^4 \mathcal{G}_{2,t}^{\mathcal{V},{(i)}} 
\end{align}
and
\begin{align}
\label{eq:def_G2L}
\mathcal{G}_{2,t}^{\lambda} = \frac{3\mathfrak{b}_0}{8 \pi \ell^3} \int dxdy \;   \chi \left( |x-y|\leq \ell \right) \varphi_{t}^2((x+y)/2) a_x^* a_y^* + h.c.
\end{align}
and
\begin{align}
\label{eq:def_G2K}
\mathcal{G}_{2,t}^{\mathcal{K}} - \mathcal{K}  =&  \int dx \; \left[ a_x^* a( - \Delta_x \pk_x)+  a^*( - \Delta_x \pk_x) a( \text{ch}_x) + a^*( \kk_x) a( - \Delta_x \rk_x) \right. \notag   \\
& \hspace{2cm} \left. + a^* ( \nabla_x \kk_x) a( \nabla_x \kk_x)  + a^*( - \Delta_x \rk_x) a( \rk_x) \right] \notag\\
&+  \int dx \; \left[ a_x^* a^*( - \Delta_x \mu_x) + a^*_x a^*( - \Delta_x \rk_x ) + a^*( - \Delta_x \pk_x) a^*( \text{sh}_x) \right. \notag \\
& \left. \hspace{2cm} + a( - \Delta_x \rk_x) a_x + a( - \Delta_x \mu_x) a_x  \right. \notag\\
&\hspace{2cm} + \left. a( \text{sh}_x) a( - \Delta_x \pk_x) + a^*(- \Delta_x \rk_x) a( \kk_x) \right] \notag\\
&+ \frac{1}{2} \int dxdy \; \omega_\infty (x-y)\varphi_t ((x+y)/2) \; \Delta \varphi_t ((x+y)/2)  a_x^*a_y^* + h.c. \notag\\
&+ \frac{1}{2} \int dxdy \; \omega_\infty (x-y) \nabla \varphi_t ((x+y)/2) \cdot \nabla \varphi_t ((x+y)/2)  a_x^*a_y^* + h.c. 
\end{align}
Here, we used the notation $\mathcal{K} = \int dx \; a_x^* \left( -  \Delta_x \right) a_x $ and  $K_{1,t} = q_{t} \widetilde{K}_{1,t} q_{t}$ and $  K_{2,t} = \left( q_{t} \otimes q_{t} \right) \widetilde{K}_{2,t}$ where $\widetilde{K}_{1,t}$ is the operator  with integral kernel
\begin{align*}
\widetilde{K}_{1,t}(x,y) = \mathfrak{b}_0 \varphi_{t}(x) \delta(x-y) \overline{\varphi_{t}(y)}
\end{align*}
and $K_{2,t}$ is the function given through
\begin{align*}
\widetilde{K}_{2,t} (x,y) = \mathfrak{b}_0 \varphi_{t} (x) \delta (x-y) \varphi_{t} (y) . 
\end{align*}
\end{definition}

Note that \eqref{eq:regularity phi} implies $K_{1,t}, K_{2,t} \in L^\infty( \mathbb{R}^6) \cap L^2( \mathbb{R}^6 )$ with norms uniform in $N$.  \\

\textit{Norm approximation.} We consider the solution $\psi_{N,t}$ of the Schr\"odinger equation \eqref{eq:Schroe} with initial data $\psi_{N,0} = U_{\varphi_0}^* \mathds{1}^{\leq N} T_{N,0}^* \Omega$. It is proven in \cite[Theorem 2]{BNNS} that for all $\alpha < \min \lbrace \beta/2, (1-\beta)/2 \rbrace$ there exists a constant $C>0$ such that  
\begin{align}
\label{eq:normapprox}
\| \mathcal{U}_{\varphi_{N,t}} \psi_{N,t} - e^{-i \int_0^t d\tau \; \eta_N (\tau ) } T_{N,t}^* \mathcal{U}_{2}(t;0) \Omega \|^2 \leq C N^{- \alpha} \exp( C \exp (C |t|))
\end{align}
for all $N$ sufficiently large and all $t \in \mathbb{R}$. \\


\subsection{Bogoliubov transformation}

The limiting dynamics $\mathcal{U}_{2} (t;s)$ defined in \eqref{eq:def_U2} is quadratic in creation and annihilation operators. As the following Proposition shows, it gives rise to a Bogoliubov transformation defined in the following. For this, we first define
\begin{align}
\label{eq:def_A}
A\left(f,g\right) = a^*\left(f\right) + a\left( \overline{g}\right) \hspace{0.3cm} \text{for} \hspace{0.3cm} f,g \in L^2\left( \mathbb{R}^3 \right).
\end{align}
On one hand
\begin{align}
\label{eq:property_A1}
A^*\left(f,g \right) = A\left(\overline{g}, \overline{f} \right) + A\left(\mathcal{J} (f,g) \right) \hspace{0.3cm} \text{with} \hspace{0.3cm} \mathcal{J} = \begin{pmatrix}
0 & J \\
J &0 
\end{pmatrix}.
\end{align}
Here,  $J: L^2\left(\mathbb{R}^3 \right) \rightarrow L^2 \left( \mathbb{R}^3 \right)$ denotes the anti-linear operator defined by $J f = \overline{f}$ for all $f \in L^2 \left( \mathbb{R}^3 \right)$. On the other hand, the commutation relations imply for $f_1,f_2, g_1,g_2 \in L^2( \mathbb{R}^3)$
\begin{align}
\label{eq:property_A2}
\left[ A\left( f_1,g_1 \right), A^* \left( f_2,g_2\right) \right] = \langle (f_1, g_1), S(f_2,g_2) \rangle_{L^2\left(\mathbb{R}^3 \right) \oplus L^2\left( \mathbb{R}^3 \right)} \hspace{0.3cm} \text{with} \hspace{0.3cm} S= \begin{pmatrix}
1 & 0 \\
0& -1 
\end{pmatrix}.
\end{align}
\begin{definition} A \textit{Bogoliubov transformation} is a linear map $\nu: L^2\left( \mathbb{R}^3 \right) \oplus L^2\left( \mathbb{R}^3 \right) \rightarrow L^2 \left( \mathbb{R}^3 \right) \oplus L^2 \left( \mathbb{R}^3 \right)$ preserving the relations \eqref{eq:property_A1} and \eqref{eq:property_A2}, i.e. $\nu^* S \nu = S$ and $ \mathcal{J} \nu =\nu \mathcal{J}$. 
\end{definition}

It turns out that a Bogoliubov transformation $\nu$ is of the form 
\begin{align*}
\nu= \begin{pmatrix}
U & JVJ \\
V & JUJ
\end{pmatrix}
\end{align*}
for linear operators $U,V : L^2 \left( \mathbb{R}^3 \right) \rightarrow L^2 \left( \mathbb{R}^3 \right)$ satisfying $U^* U - V^*V =1$ and $U^* JVJ - V^*JUJ =0$. 

The following Proposition is proven in Section \ref{subsec:proof_prop1}.

\begin{proposition}
\label{prop:bogo_U2}
Let $\mathcal{U}_{2}(t;s)$ be the dynamics defined in \eqref{eq:def_G2}.
For every $t,s \in \mathbb{R}$ there exists a bounded linear map 
\begin{align*}
\Theta (t;s) = L^2( \mathbb{R}^3) \oplus L^2( \mathbb{R}^3) \rightarrow L^2( \mathbb{R}^3) \oplus L^2( \mathbb{R}^3)
\end{align*} 
such that 
\begin{align*}
\mathcal{U}_{2}^*(t;s) A(f,g) \mathcal{U}_{2}(t;s) = A\left( \Theta(t;s) (f,g) \right)
\end{align*}
for all $f,g \in L^2( \mathbb{R}^3)$. The map $\Theta (t;s)$ satisfies
\begin{align}
\label{eq:bogo_prop}
\Theta (t;s)\mathcal{J} = \mathcal{J} \Theta (t;s), \hspace{0.5cm} S= \Theta(t;s)^* S \Theta(t;s)
\end{align}
where $\mathcal{J}$ and $S$ are defined in \eqref{eq:property_A1} resp. \eqref{eq:property_A2}. The Bogoliubov transformation $\Theta (t;s)$ can be written as
\begin{align*}
\Theta(t;s) = \begin{pmatrix}
U(t;s) & JV(t;s) J \\
V(t;s) & JU(t;s) J 
\end{pmatrix} 
\end{align*}
for bounded linear maps $U(t;s) , V(t;s) : L^2( \mathbb{R}^3) \rightarrow L^2( \mathbb{R}^3)$ satisfying
\begin{align}
\label{eq:propo_bogo_prop}
U^*( t;s) U(t;s) - V^*(t;s) V(t;s) = 1, \hspace{0.5cm} U^*(t;s) JV(t;s) J=V^*(t;s)JU(t;s) J. 
\end{align}
\end{proposition}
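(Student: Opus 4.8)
The plan is to reduce the statement to a linear evolution equation on the one-particle space $L^2(\mathbb{R}^3)\oplus L^2(\mathbb{R}^3)$. The starting point is that $\mathcal{G}_{2,t}$ from Definition \ref{def:U2} is formally self-adjoint and \emph{quadratic} in creation and annihilation operators (up to an inessential additive constant), so that for all $f,g\in L^2(\mathbb{R}^3)$ the commutator $[A(f,g),\mathcal{G}_{2,t}]$ is again linear in $a,a^*$ and carries no constant term; collecting coefficients one may write
\begin{align*}
[A(f,g),\mathcal{G}_{2,t}] = A\big(\mathcal{A}(t)(f,g)\big)
\end{align*}
for a densely defined linear operator $\mathcal{A}(t)$ on $L^2(\mathbb{R}^3)\oplus L^2(\mathbb{R}^3)$ whose entries are read off from the integral kernels appearing in \eqref{eq:def_G2V}, \eqref{eq:def_G2L}, \eqref{eq:def_G2K} and from $(i\partial_t T_t)T_t^*$. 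I would first carry out this bookkeeping and record the structure $\mathcal{A}(t)=\mathcal{A}_0+\mathcal{B}(t)$, where $\mathcal{A}_0=\begin{pmatrix}\Delta & 0\\ 0 & -\Delta\end{pmatrix}$ comes from the kinetic term $\mathcal{K}=\int dx\, a_x^*(-\Delta_x)a_x$ and is the self-adjoint generator of a strongly continuous unitary group on $L^2(\mathbb{R}^3)\oplus L^2(\mathbb{R}^3)$, while $\mathcal{B}(t)$ is \emph{bounded} — in fact its off-diagonal blocks and the $(\mathcal{G}_{2,t}^{\mathcal{K}}-\mathcal{K})$-contribution are Hilbert--Schmidt — with $\|\mathcal{B}(t)\|\le Ce^{C|t|}$. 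This last bound is precisely where the kernel estimates \eqref{eq:estimates_k,sh,p,r,mu}, \eqref{eq:estimates_nabla_eta} and \eqref{eq:estimates_delta_eta} enter (note that $\Delta_i\rk_t,\Delta_i\pk_t,\Delta_i\mu_t$ are Hilbert--Schmidt and $K_{1,t},K_{2,t}\in L^\infty\cap L^2$), together with $\|(i\partial_t T_t)T_t^*\|\le Ce^{C|t|}$, which follows from $\|\partial_t\eta_t\|_2\le Ce^{C|t|}$ in \eqref{eq:estimates_nabla_eta}.

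Next I would define $\Theta(t;s)$ as the unique solution of the linear evolution equation on $L^2(\mathbb{R}^3)\oplus L^2(\mathbb{R}^3)$ generated by $\mathcal{A}(t)$ — i.e.\ $i\partial_t\Theta(t;s)=\Theta(t;s)\mathcal{A}(t)$, $i\partial_s\Theta(t;s)=-\mathcal{A}(s)\Theta(t;s)$, $\Theta(t;t)=\mathds{1}$, the precise placement of the composition being fixed by the conventions in \eqref{eq:def_U2}. Existence and boundedness follow from the standard ``bounded perturbation of the generator of a $C_0$-group'' argument: writing $\Theta(t;s)$ as the norm-convergent Dyson series built from the free propagator $e^{-i(t-s)\mathcal{A}_0}$ and the bounded insertions $\mathcal{B}(\cdot)$ one obtains a bounded operator whose norm is controlled uniformly on compact time intervals. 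I would then establish the intertwining identity $\mathcal{U}_2^*(t;s)A(f,g)\mathcal{U}_2(t;s)=A\big(\Theta(t;s)(f,g)\big)$ by a uniqueness argument: both sides, viewed as operator-valued functions of $t$, agree at $t=s$, and, tested against vectors in the dense domain of finite-particle states with smooth components, both satisfy the same linear differential relation obtained by differentiating in $t$ and using $[A(h),\mathcal{G}_{2,t}]=A(\mathcal{A}(t)h)$; a Gronwall estimate in this weak formulation forces equality. This produces the bounded linear map $\Theta(t;s)$ of the statement.

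The algebraic properties are then formal consequences of the intertwining identity and of the unitarity of $\mathcal{U}_2(t;s)$ (which holds since $\mathcal{G}_{2,t}$ is self-adjoint). Taking adjoints in $\mathcal{U}_2^*(t;s)A(f,g)\mathcal{U}_2(t;s)=A(\Theta(t;s)(f,g))$ and using $A(f,g)^*=A(\mathcal{J}(f,g))$ (cf.\ \eqref{eq:property_A1}) gives $A(\Theta(t;s)\mathcal{J}(f,g))=A(\mathcal{J}\Theta(t;s)(f,g))$ for all $f,g$, hence $\Theta(t;s)\mathcal{J}=\mathcal{J}\Theta(t;s)$. Conjugating the canonical commutation relation \eqref{eq:property_A2} by the unitary $\mathcal{U}_2(t;s)$ — which preserves commutators and leaves the scalar on the right-hand side unchanged — yields $\langle(f_1,g_1),S(f_2,g_2)\rangle=\langle\Theta(t;s)(f_1,g_1),S\,\Theta(t;s)(f_2,g_2)\rangle$ for all arguments, i.e.\ $\Theta(t;s)^*S\,\Theta(t;s)=S$. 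Finally, the block form $\Theta(t;s)=\begin{pmatrix}U(t;s) & JV(t;s)J\\ V(t;s) & JU(t;s)J\end{pmatrix}$ and the relations \eqref{eq:propo_bogo_prop} follow from the structural description of Bogoliubov transformations recalled just before the statement: $\Theta(t;s)\mathcal{J}=\mathcal{J}\Theta(t;s)$ forces the lower-left block $V(t;s)$ to determine the upper-right block as $JV(t;s)J$ and the lower-right block as $JU(t;s)J$, and $\Theta(t;s)^*S\,\Theta(t;s)=S$ translates into $U^*U-V^*V=1$ and $U^*JVJ=V^*JUJ$.

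I expect the main obstacle to lie in Steps 1--2, namely in making the reduction rigorous in the presence of the unbounded kinetic operator $\mathcal{K}$: one must control domains when evaluating $[A(f,g),\mathcal{G}_{2,t}]$ and when differentiating $\mathcal{U}_2^*(t;s)A(f,g)\mathcal{U}_2(t;s)$, confirm that $\mathcal{A}(t)$ is genuinely of the form ``generator of a unitary $C_0$-group plus bounded operator'' (so that the Dyson series converges and $\Theta(t;s)$ is bounded uniformly on compact time intervals), and justify the weak uniqueness argument on a domain that is invariant in the appropriate sense under both $\mathcal{U}_2(t;s)$ and the fields $A(\cdot)$. By contrast, once the intertwining identity is in hand the algebraic part is immediate.
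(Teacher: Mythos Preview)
Your approach is correct and takes a genuinely different route from the paper. The paper, following \cite{BKS} and \cite{CLS}, works entirely on Fock space: it first shows (Lemma~\ref{lemma:UPk}) via a Gronwall argument on the scalars $\langle\psi,\mathcal{U}_2^*a^\sharp(f)\mathcal{U}_2\Omega\rangle$ with $\psi$ an $m$-particle vector, $m\neq 1$, that $\mathcal{U}_2^*(t;s)a^\sharp(f)\mathcal{U}_2(t;s)\Omega$ stays in the one-particle sector, thereby \emph{defining} $U(t;s),V(t;s)$ through their action on the vacuum; a second Gronwall argument then shows $[[\mathcal{U}_2^*A\mathcal{U}_2,A'],\mathcal{M}]=0$ for every bounded $\mathcal{M}$ preserving $\mathcal{D}(\mathcal{K}+\mathcal{N})$, whence $R:=\mathcal{U}_2^*A\mathcal{U}_2-A(\Theta)$ commutes with all creation/annihilation operators and annihilates $\Omega$, so $R=0$. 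You instead construct $\Theta(t;s)$ directly on the one-particle space $L^2\oplus L^2$ as the propagator of $i\partial_t\Theta=\Theta\mathcal{A}(t)$ via a Dyson series, and only afterwards verify the intertwining identity by an ODE-uniqueness argument on Fock space. Both routes rest on the same technical ingredient---splitting off the unbounded kinetic term and using that the remainder contributes a \emph{bounded} one-particle operator (your $\mathcal{B}(t)$, respectively the paper's $\mathcal{G}_{\mathrm{gen},t}-\mathcal{K}$ in Lemma~\ref{lemma:U2}), and both conjugate by $e^{i\mathcal{K}t}$ to handle the free part. Your approach makes the one-particle linear evolution explicit and is conceptually closer to the classical symplectic picture of Bogoliubov maps; the price is that you must justify two separate constructions (the one-particle propagator and the Fock-space differentiation with its domain issues, which you correctly flag), whereas the paper's scalar Gronwall estimates sidestep the explicit construction of $\Theta$ and work with quantities that are manifestly well-defined from the start.
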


\subsection{Central limit theorem}

From a probabilistic point of view \eqref{eq:BEC time} implies a law of large numbers, in the sense that for a one-particle self-adjoint operator $O$ on $L^2( \mathbb{R}^3)$ and for every $\delta >0$
\begin{align}
\label{eq:lln}
\lim_{N \rightarrow \infty} \mathbb{P}_{\psi_{N,t}}  \left( \left\vert\frac{1}{N} \sum_{j=1}^N\left( O^{(j)} - \langle \varphi_t, O \varphi_t \rangle \right)  \right\vert > \delta \right) =0.
\end{align}
Here $O^{(j)}$ denotes the operator on $L^2( \mathbb{R}^{3N})$ acting as $O$ on the $j$-th particle and as identity elsewhere. The proof of \eqref{eq:lln} follows from Markov's inequality (see \cite{BSS}).  As a next step, we are interested in a central limit theorem. For this, we consider the rescaled random variable
\begin{align}
\label{eq:random var}
\mathcal{O}_{N,t} = \frac{1}{\sqrt{N}} \sum_{j=1}^N \left( O^{(j)} - \langle \varphi_{N,t}, O \varphi_{N,t} \rangle \right)
\end{align}
where $\varphi_{N,t}$ denotes the solution of \eqref{eq:modified Hartree} with initial data $\varphi_{N,0} = \varphi_0$.

We consider initial data $\psi_{N,0}$ of the form $\psi_{N,0} = \mathcal{U}_{\varphi_0}^* \mathds{1}^{\leq N} T_{N,0}^* \Omega $ exhibiting Bose-Einstein condensation \cite[Theorem 3]{BNNS}. As a consequence, such a initial data satisfy a law of large numbers in the sense of \eqref{eq:lln}. Moreover, such initial data obeys a central limit theorem in the sense that
\begin{align}
\label{eq:gauss0}
\mathbb{P}_{\psi_{N,0}} \left( \mathcal{O}_{N,0} \in [a;b] \right) \rightarrow  \mathbb{P} \left( G_0 \in [ a,b ] \right) \hspace{0.3cm} \text{as} \hspace{0.3cm} N \rightarrow \infty 
\end{align}
for every $- \infty < a < b < \infty $. Here, $G_0$ denotes the centered Gaussian random variable with variance $\| \sigma_0 \|_2^2$ where
\begin{align}
\label{eq:variance0}
\sigma_0 = \sinh_{\eta_0} \overline{q_0 O \varphi_0} + \cosh_{\eta_0} q_0 O \varphi_0 
\end{align}
following from Theorem \ref{thm:CLT} for time $t=0$. 

Note that initial data of the form $\psi_{N,0} = \mathcal{U}_{\varphi_0}^* \mathds{1}^{\leq N} T_{N,0}^* \Omega $ describe approximate ground states of trapped systems \cite{BCCS_betakleiner1}. In experiments such initial data are prepared by trapping particles through external fields and by cooling them down to extremely low temperatures so that the system essentially relaxes to its ground state. 

The validity of a central limit theorem for the ground state of trapped systems has already been adressed in \cite{RS_CLT}. To be more precise, \cite{RS_CLT} considers the ground state of \eqref{eq:HNbeta} for $\beta=1$, i.e. in the Gross-Pitaevskii regime. The ground state is known to exhibit Bose-Einstein condensation. It is proven that the ground state satisfies a central limit theorem. The arguments of the proof can be adapted to the intermediate regime $\beta < 1$ using the norm approximation for the ground state obtained in \cite{BCCS_betakleiner1}.  \\

Now, we consider the time evolution of the initial data $\psi_{N,0} = \mathcal{U}_{\varphi_0}^* \mathds{1}^{\leq N} T_{N,0}^* \Omega $  with respect to the Schr\"odinger equation \eqref{eq:Schroe} and show the validity of a (multi-variate) central limit theorem. 


\begin{theorem}
\label{thm:CLT}
Let $\beta \in (0,1)$ and assume $V$ to be radially symmetric, smooth, compactly supported and point-wise non-negative. Furthermore fix $\ell >0$ (independent of $N$). Let $\varphi_{t}$ denote the solution of \eqref{eq:nonlinear Schroe} and $\varphi_{N,t}$ the solution of \eqref{eq:modified Hartree} both with initial data $\varphi_{0} \in H^4( \mathbb{R}^3)$.  Moreover, we denote by $\psi_{N,t}$  the solution of the Schr\"odinger equation \eqref{eq:Schroe} with initial data $\psi_{N,0} = \mathcal{U}_{\varphi_{N,0}}^* \mathds{1}^{\leq N} T_{N,0}^* \Omega$ (where $\mathcal{U}_{\varphi_{N,0}}$ and $T_{N,0}$ are defined in \eqref{eq:propU} resp. \eqref{eq:def_bogo}). For $k \in \mathbb{N}$, let $O_1, \dots, O_k$ be bounded operators on $L^2( \mathbb{R}^3)$. We define $\nu_{j,t} \in L^2( \mathbb{R}^3)$ through
\begin{align}
\label{eq:covt}
\nu_{j,t}=&\left( U(t;0) \cosh_{\eta_t} + \overline{V}(t;0) \sinh_{\eta_t} \right) q_t O_j \varphi_t  \notag\\
&+ \left( U(t;0) \sinh_{\eta_t} + \overline{V}(t;0) \cosh_{\eta_t} \right) \overline{q_t O_j \varphi_t} 
\end{align}
where  the operators $U(t;0),V(t;0) \in L^2( \mathbb{R}^3 \times \mathbb{R}^3)$ are defined in Proposition \ref{prop:bogo_U2},  $q_t = 1- | \varphi_t \rangle \langle \varphi_t |$ and $\eta_t$ as defined in \eqref{eq:def eta}. 
 
Assume $\Sigma_t \in \mathbb{C}^{k \times k}$, given through
\begin{align*}
\left( \Sigma_t\right)_{i,j} = \begin{cases}
 \langle \nu_{i,t}, \nu_{j,t} \rangle & \text{for} \quad i<j \\
  \langle \nu_{j,t}, \nu_{i,t} \rangle & \text{otherwise} 
\end{cases}
\end{align*}
is invertible.

Furthermore, let  $g_1, \dots g_k \in L^1( \mathbb{R})$ with $\widehat{g}_i \in L^1( \mathbb{R}, (1+|s|)^4 ds)$ for all $i \in \lbrace 1, \dots k \rbrace$ and let $\mathcal{O}_{j,N,t}$ denote the random variable \eqref{eq:random var} associated to $O_j$ for all $j \in \lbrace 1, \dots, k \rbrace$.  For every $\alpha < \min \lbrace \beta/2, (1-\beta) /2 \rbrace$, there exists $C>0$ such that 
\begin{align*}
&\left\vert \mathbb{E}_{\psi_{N,t}} \right. \left[ g_1 ( \mathcal{O}_{1,N,t}) \dots g_k ( \mathcal{O}_{k,N,t} ) \right]  \\
&\hspace{2cm} - \left.  \frac{1}{\sqrt{(2 \pi)^k \det \Sigma}}  \int dx_1 \dots dx_k \; g_1(x_1) \dots g_k( x_k) \; e^{-\frac{1}{2} \sum_{i,j=1}^k \Sigma_{i,j}^{-1} x_j x_j} \right\vert \\
& \hspace{0.1cm}\leq C \exp ( \exp (C |t|)) N^{-\alpha} \prod_{j=1}^k \int d\tau \; | \widehat{g}_j (\tau) | \left( 1 +  N^{\alpha-\gamma} |\tau|^2 + N^{\alpha-1/2} | \tau|^3 + N^{\alpha -1} |\tau|^4 \right)
\end{align*}
where $\gamma = \min \lbrace \beta, (1-\beta) \rbrace $.
\end{theorem}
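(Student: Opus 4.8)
The plan is to express the characteristic function of the vector $(\mathcal{O}_{1,N,t},\dots,\mathcal{O}_{k,N,t})$ in terms of Fock space operators, conjugate through the sequence of transformations introduced in the norm approximation, and read off the Gaussian limit from the action of the limiting Bogoliubov transformation $\Theta(t;0)$. Concretely, by Fourier inversion it suffices to control $\mathbb{E}_{\psi_{N,t}}\big[ e^{i s_1 \mathcal{O}_{1,N,t}} \cdots e^{i s_k \mathcal{O}_{k,N,t}} \big]$ with explicit dependence on $s_1,\dots,s_k$; integrating against $\widehat{g}_1,\dots,\widehat{g}_k$ and using the moment bounds $\widehat{g}_i \in L^1((1+|s|)^4 ds)$ then produces the stated error with its $|\tau|^2,|\tau|^3,|\tau|^4$ corrections. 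The random variable $\mathcal{O}_{j,N,t}$ is second-quantized: writing $\psi_{N,t} = \mathcal{U}_{\varphi_{N,t}}^* \xi_{N,t}$ and using \eqref{eq:propU}, one has on the truncated Fock space $\mathcal{O}_{j,N,t} = \mathrm{d}\Gamma(q_{N,t} O_j q_{N,t})/\sqrt{N} + (b^*(q_{N,t} O_j \varphi_{N,t}) + b(q_{N,t} O_j \varphi_{N,t})) + \text{(diagonal }\varphi_{N,t}\text{ term)}$, so that in the limit the leading contribution is a field operator $\Phi(q_t O_j \varphi_t) = a^*(q_t O_j\varphi_t) + a(q_t O_j\varphi_t)$, which is exactly $A(f,g)$ with $f = g = q_t O_j \varphi_t$ in the notation \eqref{eq:def_A}.

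The key steps, in order, are: (i) replace $\xi_{N,t}$ by $e^{-i\int_0^t \eta_N} T_{N,t}^* \mathcal{U}_2(t;0)\Omega$ using the norm approximation \eqref{eq:normapprox}, paying a price $C N^{-\alpha}\exp(\exp(C|t|))$; (ii) conjugate the operators $e^{i s_j \mathcal{O}_{j,N,t}}$ through $T_{N,t}^*$ using \eqref{eq:action_bogo}, and then through $\mathcal{U}_2(t;0)$ using Proposition \ref{prop:bogo_U2} — because $\mathcal{O}_{j,N,t}$ is (to leading order) linear in creation/annihilation operators, conjugating by a Bogoliubov transformation keeps it linear, and $\mathcal{U}_2^*(t;0)A(q_tO_j\varphi_t,q_tO_j\varphi_t)\mathcal{U}_2(t;0) = A(\Theta(t;0)(q_tO_j\varphi_t, q_tO_j\varphi_t))$, whose first component is exactly $\nu_{j,t}$ from \eqref{eq:covt}; (iii) having reduced to $\langle \Omega, e^{i s_1 \Phi(\nu_{1,t})} \cdots e^{i s_k \Phi(\nu_{k,t})}\Omega\rangle$ up to errors, use the CCR to disentangle the product of exponentials of field operators — this is the standard computation $\langle\Omega, \prod_j e^{is_j\Phi(\nu_{j,t})}\Omega\rangle = \exp(-\tfrac14 \sum_j s_j^2\|\nu_{j,t}\|_2^2 - \tfrac12\sum_{i<j}s_i s_j \langle\nu_{i,t},\nu_{j,t}\rangle + \dots)$ via Baker–Campbell–Hausdorff, recognizing the quadratic form with matrix $\Sigma_t$; (iv) Fourier-invert back to recover the Gaussian density $\frac{1}{\sqrt{(2\pi)^k\det\Sigma}}\int g_1(x_1)\cdots g_k(x_k) e^{-\frac12\sum \Sigma_{i,j}^{-1}x_ix_j}$, which requires $\Sigma_t$ invertible — hence the hypothesis.

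The error bookkeeping must be done carefully at each conjugation. The subleading pieces of $\mathcal{O}_{j,N,t}$ — the $\mathrm{d}\Gamma$-term of size $O(N^{-1/2}\mathcal{N}_+)$, the difference between $b,b^*$ and $a,a^*$ of order $O(N^{-1}\mathcal{N}_+^{3/2})$, and the replacement of $\varphi_{N,t},\eta_{N,t}$ by $\varphi_t,\eta_t$ using \eqref{eq:approx phi} and \eqref{eq:approx eta} — each contribute to the error only after one expands $e^{is_j\mathcal{O}_{j,N,t}}$, and each such expansion brings down one power of $s_j$, i.e. of $|\tau|$; tracking how many such expansions are needed to bound a given correction term explains the polynomial $1 + N^{\alpha-\gamma}|\tau|^2 + N^{\alpha-1/2}|\tau|^3 + N^{\alpha-1}|\tau|^4$. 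The needed a-priori bound is propagation of moments of $\mathcal{N}_+$ along $\mathcal{U}_2(t;0)$ (equivalently, boundedness of $\|\mathcal{N}_+^{1/2}\mathcal{U}_2(t;0)\Omega\|$ and higher powers), which follows from the structure of $\mathcal{G}_{2,t}$ together with a Gronwall argument using the estimates \eqref{eq:estimates_k,sh,p,r,mu}–\eqref{eq:estimates_delta_eta}.

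I expect the main obstacle to be step (ii)–(iii) combined with the error control: one must show that the commutator of $\mathcal{O}_{j,N,t}$ with itself (and cross-commutators of the $\mathcal{O}_{i,N,t}$) — the quantity producing the Gaussian — converges to $\langle\nu_{i,t},\nu_{j,t}\rangle$ at rate $N^{-\alpha}$, \emph{uniformly} after conjugation by $T_{N,t}^*$ and $\mathcal{U}_2(t;0)$, and simultaneously that all higher-order terms in the BCH/Duhamel expansion of the product of exponentials are controlled by vacuum expectations of powers of $\mathcal{N}_+$ that do not blow up. In other words, the delicate point is not the algebra but showing that the non-Gaussian remainder terms, after all conjugations, are genuinely $o(1)$ with the claimed quantitative dependence on the $g_j$ through $\int|\widehat{g}_j(\tau)|(1+\dots)\,d\tau$; this is where the regularity hypotheses on $V$, on $\varphi_0 \in H^4$, and on the $\widehat{g}_j$ all get used.
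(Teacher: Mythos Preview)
Your proposal is correct and follows essentially the same route as the paper: Fourier-invert, use the norm approximation \eqref{eq:normapprox}, conjugate through $\mathcal{U}_{\varphi_{N,t}}$ to obtain $\phi_b(q_{N,t}O_j\varphi_{N,t})$ plus a $\mathrm{d}\Gamma$-remainder, successively replace $\phi_b\to\phi_a$, $\varphi_{N,t}\to\varphi_t$, $\eta_{N,t}\to\eta_t$ (each via a Duhamel expansion controlled by \eqref{eq:phiaN}, \eqref{eq:phibN}, \eqref{eq:TNT*} and Lemma~\ref{lemma:U2NU2^*}), then conjugate through $T_{N,t}$ and $\mathcal{U}_2(t;0)$ using Proposition~\ref{prop:bogo_U2}, and finally compute the vacuum expectation via Baker--Campbell--Hausdorff. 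The only refinement to your outline is that the paper organizes the error control as a sequence of separate lemmas (one per replacement), each bounding a single Duhamel difference, rather than tracking all subleading pieces simultaneously; this makes the bookkeeping of the $|\tau|$-powers more transparent but is not a conceptual departure from what you describe.
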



A similiar result has been established in \cite{BKS,BSS} for the mean-field regime characterized through weak interaction of the particles.  It is shown that fluctuations around the non-linear Hartree equation of bounded self-adjoint one-particle operators satisfy a (multi-variate) central limit theorem. We show that this result is true in the intermediate regime, where the interaction is singular, too. In particular, the correlation structure which becomes of importance in the intermediate regime does not affect the validity of a central limit theorem. Though, it affects the covariance matrix \eqref{eq:covt} through the Bogolioubiv transform $T_{t}$. \\

 Similarily as in \cite[Corollary 1.3 ]{BSS}, Theorem \ref{thm:CLT} implies a Berry-Ess\'{e}en type central limit theorem. To be more precise, we consider a bounded self-adjoint operator $O$ on $L^2( \mathbb{R}^3)$ and the random variable
 \begin{align*}
 \mathcal{O}_{N,t} = \frac{1}{\sqrt{N}} \sum_{i=1}^N \left( O^{(i)} - \langle \varphi_{N,t}, O \varphi_{N,t} \rangle \right) .
 \end{align*}
For every $\alpha < \min \lbrace \beta /2, (1-\beta) /2 \rbrace$ and $- \infty < a < b < \infty$, there exists a constant $C>0$ such that 
\begin{equation}\label{eq:cor}
\vert \mathbb{P}_{\psi_{N,t}} \left( \mathcal{O}_{N,t} \in \left[ a; b \right] \right) - \mathbb{P} \left( G_t \in \left[ a; b \right] \right) \vert \leq C N^{-\alpha/2}
\end{equation}
where $G_t$ is the centered Gaussian random variable with variance $\| \sigma_t \|_2^2$ and $\sigma_t \in L^2( \mathbb{R}^3)$ is defined through 
\begin{align}
\label{eq:variance}
\sigma_t = \left( U(t;0) \cosh_{\eta_t} + \overline{V}(t;0) \sinh_{\eta_t} \right) q_t O \varphi_t + \left( U(t;0) \sinh_{\eta_t} + \overline{V}(t;0) \cosh_{\eta_t} \right) \overline{q_t O \varphi_t } .
\end{align}
\vspace{0.3cm}

Note that Theorem \ref{thm:CLT} resp. \eqref{eq:cor} imply that fluctuations around the non-linear Hartree equation with singular interaction satisfy a (multi-variate) central limit theorem. Comparing with $\sigma_0$ from \eqref{eq:variance0}, the fluctuations enter in the variance $\sigma_t$ through the operators $U(t;0), V(t;0)$ as defined in Proposition \eqref{prop:bogo_U2} and the Bogoliubov transformation \eqref{eq:def bogo limiting}. \\

Moreover, note that the covariance matrix \eqref{eq:covt} resp. the variance \eqref{eq:variance} are completely determined by the Bogoliubov transform $T_{t}$ defined in \eqref{eq:def bogo limiting} and the quadratic fluctuation dynamics $\mathcal{U}_2 (t;0)$ defined in \eqref{eq:def_G2}. Theorem \ref{thm:CLT} resp. the properties \eqref{eq:propo_bogo_prop} of the operators $U(t;0), V(t;0)$ show that the solution of the Schr\"odinger equation \eqref{eq:Schroe} modulo the extraction of the condensate, is approximately a quasi-free state for quasi-free initial data. This observation coincides with results in \cite{GM,NM_betaeindrittel,NM_betaeinhalb}. \\


\section{Proof of Results}
 
\subsection{Preliminaries}

The proof of Theorem \ref{thm:CLT} is based on the norm approximation \eqref{eq:normapprox} from \cite{BNNS}. In the following we collect useful properties of the unitaries used therein. 

To this end, we define the more general quadratic dynamics $\mathcal{U}_{\mathrm{gen}}(t;s)$. 

\begin{definition}
\label{def:U_gen}
Let $\mathcal{U}_{\mathrm{gen}}(t;s)$ be the dynamics satisfying
\begin{align}
\label{eq:U2_allg}
i \partial_t\mathcal{U}_{\mathrm{gen}}(t;s) = \mathcal{G}_{\mathrm{gen},t} \mathcal{U}_{\mathrm{gen}}(t;s)
\end{align}
 where the generator $\mathcal{G}_{\mathrm{gen},t}$ is of the form
\begin{align}
\label{eq:G2_allg}
\mathcal{G}_{\mathrm{gen},t} =& \int dx \; \nabla_x a_x^* \nabla_x a_x + \int dxdy \; H_t^{(1)} (x;y) a_x^* a_y \notag \\
&+ \int dxdy \; \left( H_t^{(2)} (x,y) a_x^* a_y^* + \overline{H_t^{(2)}} (x,y) a_x a_y \right) + c
\end{align}
with
\begin{align}
\label{ass1}
\| H_t^{(1)} \| \leq C e^{C|t|}, \quad  \| H_t^{(2)} \|_2 \leq C e^{C |t|}
\end{align}
for constants $c,C>0$. 
\end{definition}

In the following, we prove the results for the dynamics $\mathcal{U}_{\mathrm{gen}}(t;s)$. As the next Lemma shows, the results then apply to $\mathcal{U}_2 (t;s)$, too.

\begin{lemma}
\label{lemma:U2}
The dynamics $\mathcal{U}_2(t;s)$ defined in Definition \ref{def:U2} is of the form of $\mathcal{U}_{\mathrm{gen}}(t;s)$ defined in Definition \ref{def:U_gen}. 
\end{lemma}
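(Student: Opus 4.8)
The plan is to read off the generator $\mathcal{G}_{2,t}$ from Definition \ref{def:U2} term by term and match it against the template \eqref{eq:G2_allg} for $\mathcal{G}_{\mathrm{gen},t}$, checking that the kinetic part is exactly $\mathcal{K} = \int dx\, \nabla_x a_x^* \nabla_x a_x$, that every remaining number-conserving quadratic term can be absorbed into $\int dxdy\, H_t^{(1)}(x;y)\,a_x^*a_y$ with $\|H_t^{(1)}\|\le Ce^{C|t|}$, that every term of the form $aa$ or $a^*a^*$ can be absorbed into the $H_t^{(2)}$-term with $\|H_t^{(2)}\|_2\le Ce^{C|t|}$, and that the purely scalar contributions (including $(i\partial_t T_t)T_t^*$ up to its quadratic part, and the constant pieces) are either constants $c$ or themselves of the two admissible forms.

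First I would isolate the kinetic term. In $\mathcal{G}_{2,t}^{\mathcal{K}}$ the operator $\mathcal{K}$ appears explicitly, and \eqref{eq:def_G2K} expresses $\mathcal{G}_{2,t}^{\mathcal{K}}-\mathcal{K}$ as a finite sum of quadratic expressions. So it suffices to show that each of the displayed terms in $\mathcal{G}_{2,t}^{\mathcal{K}}-\mathcal{K}$, in $\mathcal{G}_{2,t}^{\mathcal{V}}$, in $\mathcal{G}_{2,t}^{\lambda}$, and the quadratic part of $(i\partial_t T_t)T_t^*$, has a kernel of the right type. For the number-conserving terms of the form $a^*(f_x)a(g_x)$ integrated against an $L^\infty\cap L^2$ weight (as in $\mathcal{G}_{2,t}^{\mathcal{V},(1)}$, $\mathcal{G}_{2,t}^{\mathcal{V},(2)}$, and the $a_x^*a(-\Delta_x\pk_x)$-type terms in $\mathcal{G}_{2,t}^{\mathcal{K}}$) I would write them as $\int dxdy\, H_t^{(1)}(x;y)a_x^*a_y$ with $H_t^{(1)}$ given by a composition of the operators $\mathrm{ch},\mathrm{sh},\kk,\pk,\rk,\mu$ and their Laplacians against the potential, and then bound $\|H_t^{(1)}\|$ using the operator-norm and Hilbert--Schmidt bounds \eqref{eq:estimates_k,sh,p,r,mu}, \eqref{eq:estimates_nabla_eta}, \eqref{eq:estimates_delta_eta}, together with $\|\varphi_t\|_\infty\le Ce^{C|t|}$ from \eqref{eq:regularity phi} and $K_{1,t}\in L^\infty\cap L^2$; e.g.\ a term like $\int dxdy\, K_{1,t}(x;y)a^*(\mathrm{ch}_x)a(\mathrm{ch}_y)$ becomes $a^*(\cdot)\,\mathrm{ch}^* K_{1,t}\,\mathrm{ch}\, a(\cdot)$ with operator norm $\le\|\mathrm{ch}\|^2\|K_{1,t}\|_{L^\infty}$. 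The terms $a^*a^*$ and $aa$ (those in $\mathcal{G}_{2,t}^{\mathcal{V},(2)},\mathcal{G}_{2,t}^{\mathcal{V},(3)}$, in $\mathcal{G}_{2,t}^{\lambda}$, the last three lines of $\mathcal{G}_{2,t}^{\mathcal{K}}$, and the off-diagonal part of $(i\partial_t T_t)T_t^*$) I would collect into a single kernel $H_t^{(2)}$ and bound $\|H_t^{(2)}\|_2$ using the same estimates plus $\|\partial_t\eta_t\|_2\le Ce^{C|t|}$ and the elementary fact that $\chi(|x-y|\le\ell)\varphi_t^2((x+y)/2)$ and $\omega_\infty(x-y)\varphi_t\Delta\varphi_t((x+y)/2)$ etc.\ are Hilbert--Schmidt with the claimed exponential-in-$t$ bound. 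The scalar terms such as $\tfrac12\|\varphi_t^2\|_2^2\, a^*(\varphi_t)a(\varphi_t)$ are rank-one, hence of $H_t^{(1)}$-type with norm $\|\varphi_t^2\|_2^2\le Ce^{C|t|}$, and any genuine $c$-number can be dumped into the constant $c$.

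The main obstacle is the term $(i\partial_t T_t)T_t^*$: unlike the other pieces it is not written out explicitly in Definition \ref{def:U2}, so I would need to compute it. Since $T_t$ is the Bogoliubov transformation with antisymmetric kernel $\eta_t$, the standard identity gives $(i\partial_t T_t)T_t^* = \tfrac12\int dxdy\,(i\dot\eta_t^{\,\sharp}(x;y)\,a_x^*a_y^* + \text{h.c.}) + (\text{number-conserving quadratic}) + (\text{scalar})$, where $\dot\eta_t^{\,\sharp}$ is an explicit combination of $\partial_t\eta_t$ conjugated by $\mathrm{ch}_{\eta_t},\mathrm{sh}_{\eta_t}$ (this is exactly the content of \cite[Appendix C]{BCS}, which the excerpt already invokes for the kernel estimates). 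Its $H^{(2)}$-part is controlled by $\|\partial_t\eta_t\|_2\,\|\mathrm{ch}\|^2\le Ce^{C|t|}$ by \eqref{eq:estimates_k,sh,p,r,mu} and \eqref{eq:estimates_nabla_eta}, its $H^{(1)}$-part similarly, and the scalar part is the constant $c$. Once all pieces are matched, assembling $H_t^{(1)}$, $H_t^{(2)}$, $c$ as the respective sums and noting that a finite sum of terms each satisfying \eqref{ass1} again satisfies \eqref{ass1} (with a larger $C$) completes the proof.
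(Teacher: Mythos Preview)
Your proposal is correct and follows essentially the same route as the paper's proof: both decompose $\mathcal{G}_{2,t}-\mathcal{K}$ into the four pieces $(i\partial_t T_t)T_t^*$, $\mathcal{G}_{2,t}^{\mathcal{V}}$, $\mathcal{G}_{2,t}^{\mathcal{K}}-\mathcal{K}$, $\mathcal{G}_{2,t}^{\lambda}$, verify term by term that each quadratic contribution has a kernel of type $H_t^{(1)}$ or $H_t^{(2)}$ with the required exponential bound using \eqref{eq:regularity phi}, \eqref{eq:estimates_k,sh,p,r,mu}--\eqref{eq:estimates_delta_eta}, and compute $(i\partial_t T_t)T_t^*$ explicitly via the Bogoliubov action to obtain quadratic expressions in $a^\sharp$ with coefficients controlled by $\|\dot\eta_t\|_2$. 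The paper carries out the computation of $(\partial_t T_t)T_t^*$ through the integral $-\int_0^1 ds\, e^{-sB(\eta_t)}(\partial_t B(\eta_t))e^{sB(\eta_t)}$ and \eqref{eq:action_bogo}, which is exactly the ``standard identity'' you invoke.
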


\begin{proof}

By the definition \eqref{eq:def_G2} of $\mathcal{G}_{2,t}$, we split
\begin{align}
\label{eq:bogo_G2_1}
\mathcal{G}_{2,t}  - \mathcal{K}= \left( i \partial_t T_{t} \right) T_{t}^* + \mathcal{G}^{\mathcal{V}}_{2,t} + \left(\mathcal{G}^{\mathcal{K}}_{2,t} - \mathcal{K} \right) + \mathcal{G}^{\lambda}_{2,t} 
\end{align}
and consider each of the summands separately. First, we consider $\mathcal{G}_{2,t}^\mathcal{V}$ defined in \eqref{eq:def_G2V}, which is again split into four terms. The first one, $\mathcal{G}_{2,t}^{\mathcal{V},1}$ of the r.h.s. of \eqref{eq:def_G2V} satisfies assumption \eqref{ass1} since on one hand 
\begin{align*}
\|\text{ch}_{\eta_t}  | \varphi_t|^2 \text{ch}_{\eta_t} \|_2 \leq \| \varphi_t \|_4^2 \leq C e^{C |t|}, \quad  \|\text{sh}_{\eta_t} \varphi_t |^2 \text{sh}_{\eta_t} \|_2 \leq \| \varphi_t \|_\infty^2  \| \text{sh}_{\eta_t} \|_2^2  \leq C e^{C |t|}
\end{align*}
and 
\begin{align*}
\| \text{ch}_{\eta_t} | \varphi_t|^2 \text{sh}_{\eta_t} \|_2 \leq C \| \varphi_t \|_2  \| \varphi_t \|_\infty \| \text{sh}_{\eta_t} \|_2 \leq C e^{C |t|}
\end{align*}
following from \eqref{eq:regularity phi} and \eqref{eq:estimates_k,sh,p,r,mu}. For the same reasons, the second term $\mathcal{G}_{2,t}^{\mathcal{V},2}$ of the r.h.s. of \eqref{eq:def_G2V} satisfies assumption \eqref{ass1}, too. For the third term $\mathcal{G}_{2,t}^{\mathcal{V},3}$, the definition of $K_{2,t}$ implies
\begin{align*}
\| K_{2,t} \text{sh}_{\eta_t} \|_2 \leq& \| \varphi_t \|_4^{1/2} \| \text{sh}_{\eta_t} \|_2 \leq C e^{C |t|},\notag \\
\| \pk_{t} K_{2,t} \text{sh}_{\eta_t} \| \leq& \| \text{sh}_{\eta_t} \|_2 \| \pk_{t} \|_2 \| \varphi_t \|_4^{1/2} \leq C e^{C |t|}
\end{align*}
again from \eqref{eq:regularity phi} and  \eqref{eq:estimates_k,sh,p,r,mu} . The forth term $\mathcal{G}_{2,t}^{\mathcal{V},4}$ satisfies the assumption \eqref{eq:G2_allg} due to \eqref{eq:regularity phi}. 

Furthermore, a transformation of variables shows 
\begin{align*}
\int dxdy \; &\chi (|x-y|\leq \ell ) \; |\varphi_t ((x+y)/2))|^4 \\
&= \int dxdy \; \chi (|x|\leq \ell ) \; |\varphi_t (y)|^4 = C \| \varphi_t \|_4^4 \leq C e^{C |t|}
\end{align*}
Therefore, $\mathcal{G}^{\lambda}_{2,t}$ is of form \eqref{eq:G2_allg}. 

Moreover, for the term $\mathcal{G}_{2,t}^{\mathcal{K}} - \mathcal{K}$ we observe with \eqref{eq:estimates_k,sh,p,r,mu} and \eqref{eq:estimates_delta_eta}
\begin{align*}
\| \Delta_1 p_{t} \|_2 \leq C e^{C |t|}, \; \| \Delta_1 \mu_{t} \|_2 \leq C e^{C |t|}, \;  \| \Delta_1 \rk_{t} \|_2  \leq C e^{C |t|}. 
\end{align*}
The remaining bounds follow in the same way. Note that \eqref{eq:estimates_nabla_eta} implies the bound
\begin{align*}
\| \nabla_1 k_{t} \nabla_1 k_{t}\| \leq \max \left\lbrace  \sup_{x} \int dz \; \vert \nabla_1 k (x;z) \vert , \; \sup_{y} \int dz \; \vert \nabla_1 k (z;y) \vert \right\rbrace \leq C .
\end{align*}
Moreover, by definition \eqref{eq:def_omega_infty} of the limiting kernel, $\omega_\infty \in L^p ( \mathbb{R}^3)$ for all $p<3$. Hence, the remaining terms of $\mathcal{G}_{2,t}^{\mathcal{K}}$ satisfy the assumptions, too.

We are left with the first term of the r.h.s. of \eqref{eq:bogo_G2_1}. We write $T_{t} = e^{ -B( \eta_{t} )}$. The properties \eqref{eq:action_bogo} of the Bogoliubov transformation lead to
\begin{align*}
\left( \partial_t T_{t} \right) T_{t}^* =& - \int_0^1 ds \; e^{-sB( \eta_{t} )} \left( \partial_t B( \eta_{t} )\right) e^{sB( \eta_{t})} \\
=&  \int_0^1 ds \; \int dxdy \;  e^{-s B( \eta_{t})}\left(  \dot{\eta}_{t}(x;y)  a_x^* a_y^* + h.c. \right) e^{s B( \eta_{t}) } \\
=&   \; \int dxdy \;  \dot{\eta}_{t}(x;y) \left(  a^*( \text{ch}_x) a^*( \text{ch}_y) + a( \text{sh}_x) a( \text{sh}_y) \right) + h.c. \\
& + \; \int dxdy \;\dot{\eta}_{t}(x;y)\left(   a^*( \text{ch}_x) a(\text{sh}_y) +  a^*( \text{ch}_y)a( \text{sh}_x)  \right) + h.c.  \\
&+ \int dxdy \; \dot{\eta}_t (x;y) \; \text{sh}_x \text{ch}_y .
\end{align*}
Since $\| \dot{\eta}_t \|_2 \leq C e^{C|t|}$ from \eqref{eq:estimates_nabla_eta}, these terms satisfy assumption \eqref{ass1}, too.  \\

\end{proof}

As proven in \cite[Proposition 8]{BNNS}, any moments of the number of particles operator are approximately preserved with respect to conjugation with the Bogoliubov transformation $T_{N,t}$. To be more precise for every fixed $k \in \mathbb{N}$ and $\delta >0$, there exists $C>0$ such that
\begin{align}
\label{eq:TNT*}
\pm \left( T_{N,t}\mathcal{N}^k T_{N,t}^* - \mathcal{N}^k\right) \leq \delta \mathcal{N}^k + C.
\end{align}


As the following Lemma shows, the moments of number of particles operator are propagated in time with respect to the quadratic $\mathcal{U}_{\mathrm{gen}}( t;0)$. 

\begin{lemma}
\label{lemma:U2NU2^*}
Let $\mathcal{U}_{\mathrm{gen}}(t;s)$ be as defined in Definition \ref{def:U_gen} and $\psi \in \mathcal{F}$. For every $k \in \mathbb{N}$, there exists a constant $C>0$ such that for all $t\in \mathbb{R}$
\begin{align*}
\langle \psi, \mathcal{U}_{\mathrm{gen}}(t;s)^* ( \mathcal{N}+1)^k \; \mathcal{U}_{\mathrm{gen}}(t;s) \psi \rangle \leq C \exp ( C \exp (C |t-s| )) \; \langle \psi, \left( \mathcal{N} + 1\right)^k \psi \rangle . 
\end{align*}
\end{lemma}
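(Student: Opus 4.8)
The plan is to prove the moment bound by a Gr\"onwall argument applied to the expectation $f_k(t) = \langle \psi, \mathcal{U}_{\mathrm{gen}}(t;s)^* (\mathcal{N}+1)^k \mathcal{U}_{\mathrm{gen}}(t;s) \psi \rangle$. First I would differentiate $f_k$ in $t$: using \eqref{eq:U2_allg}, we have $i\partial_t \mathcal{U}_{\mathrm{gen}}(t;s) = \mathcal{G}_{\mathrm{gen},t} \mathcal{U}_{\mathrm{gen}}(t;s)$ with $\mathcal{G}_{\mathrm{gen},t}$ symmetric, so
\begin{align*}
\partial_t f_k(t) = \langle \mathcal{U}_{\mathrm{gen}}(t;s)\psi, \, i[\mathcal{G}_{\mathrm{gen},t}, (\mathcal{N}+1)^k] \, \mathcal{U}_{\mathrm{gen}}(t;s)\psi \rangle.
\end{align*}
Since $(\mathcal{N}+1)^k$ commutes with the number-conserving parts of $\mathcal{G}_{\mathrm{gen},t}$ (the kinetic term $\int dx\, \nabla_x a_x^* \nabla_x a_x$, the $H_t^{(1)}$ term, and the constant $c$), only the off-diagonal quadratic terms $\int dxdy\, (H_t^{(2)}(x,y) a_x^* a_y^* + \text{h.c.})$ contribute to the commutator. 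The goal is then to bound this commutator term by $C e^{C|t-s|} (\mathcal{N}+1)^k$ as a form inequality.

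The key estimate is that for a Hilbert-Schmidt kernel $H$, the operator $\int dxdy\, H(x,y) a_x^* a_y^*$ and its adjoint, when commuted with $(\mathcal{N}+1)^k$, produce terms controlled by $\|H\|_2 (\mathcal{N}+1)^k$. Concretely, one writes $[a_x^* a_y^*, (\mathcal{N}+1)^k]$ using $(\mathcal{N}+1)^k a_x^* a_y^* = a_x^* a_y^* (\mathcal{N}+3)^k$, so the commutator of the full off-diagonal term with $(\mathcal{N}+1)^k$ can be written as $\int dxdy\, H_t^{(2)}(x,y) a_x^* a_y^* \big( (\mathcal{N}+3)^k - (\mathcal{N}+1)^k \big) + \text{h.c.}$, and $(\mathcal{N}+3)^k - (\mathcal{N}+1)^k \leq C_k (\mathcal{N}+1)^{k-1}$. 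Then, for any vector $\phi$ in the form domain, one estimates $|\langle \phi, \int dxdy\, H_t^{(2)}(x,y) a_x^* a_y^* (\mathcal{N}+1)^{k-1} \phi\rangle|$ by Cauchy-Schwarz in $x,y$: bound $\|a_x a_y \phi\|$ and $\|(\mathcal{N}+1)^{k-1} a_x^* a_y^* \phi\|$ type quantities, integrate against $\|H_t^{(2)}\|_2$, and use $\int dxdy \|a_x a_y \phi\|^2 = \|\mathcal{N}^{1/2}(\mathcal{N}-1)^{1/2}\phi\|^2 \leq \|(\mathcal{N}+1)\phi\|^2$ together with standard pull-through formulas to arrive at a bound of the form $C\|H_t^{(2)}\|_2 \langle \phi, (\mathcal{N}+1)^k \phi\rangle$. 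Applying this with $\phi = \mathcal{U}_{\mathrm{gen}}(t;s)\psi$ and using \eqref{ass1} gives
\begin{align*}
|\partial_t f_k(t)| \leq C e^{C|t-s|} f_k(t),
\end{align*}
and Gr\"onwall's lemma then yields $f_k(t) \leq \exp\big(C\int_s^t e^{C|\tau-s|}d\tau\big) f_k(s) \leq C\exp(C\exp(C|t-s|)) f_k(s)$, which is the claimed bound.

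There is one technical subtlety I would address carefully: these manipulations with unbounded operators and commutators require an a priori justification that $f_k(t)$ is finite and differentiable, and that the formal commutator computation is legitimate. The standard remedy is to first work with the regularized cutoff operator $(\mathcal{N}+1)^k \mathds{1}(\mathcal{N} \leq M)$ or equivalently with $(\mathcal{N}/(1+\mathcal{N}/M)+1)^k$, carry out the Gr\"onwall argument with $M$-uniform constants (the cutoff only helps in the estimates), and then let $M \to \infty$ by monotone convergence; this also handles the case of general $\psi$ that may only lie in the domain of $(\mathcal{N}+1)^{k/2}$. I expect this regularization bookkeeping to be the main obstacle in making the argument rigorous, whereas the core form estimate on the commutator of the pairing term with $(\mathcal{N}+1)^k$ is routine and entirely analogous to standard estimates in the Bogoliubov-theory literature (e.g. the bounds underlying \eqref{eq:TNT*}).
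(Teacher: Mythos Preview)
Your proposal is correct and follows essentially the same Gr\"onwall strategy as the paper: differentiate the moment, observe that only the off-diagonal $H_t^{(2)}$ pairing terms survive in the commutator with $(\mathcal{N}+1)^k$, control these by $\|H_t^{(2)}\|_2$, and integrate. The only cosmetic difference is in how the commutator is unpacked: the paper uses the telescoping identity $[(\mathcal{N}+1)^k,\mathcal{G}_{\mathrm{gen},t}]=\sum_{i=1}^k(\mathcal{N}+1)^{i-1}[\mathcal{N},\mathcal{G}_{\mathrm{gen},t}](\mathcal{N}+1)^{k-i}$ and then estimates each summand by Cauchy--Schwarz, whereas you use the pull-through $(\mathcal{N}+1)^k a_x^*a_y^*=a_x^*a_y^*(\mathcal{N}+3)^k$ together with $(\mathcal{N}+3)^k-(\mathcal{N}+1)^k\le C_k(\mathcal{N}+1)^{k-1}$; both lead to the same bound. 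One small slip: assumption \eqref{ass1} gives $\|H_t^{(2)}\|_2\le Ce^{C|t|}$, not $Ce^{C|t-s|}$, so your differential inequality should read $|\partial_t f_k(t)|\le Ce^{C|t|}f_k(t)$ (the paper itself has the same mild mismatch between the proof and the stated bound). Your remark on regularizing $(\mathcal{N}+1)^k$ by a cutoff is a point the paper passes over in silence; it is indeed the correct way to make the argument rigorous.
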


\begin{proof} 
We compute the derivative
\begin{align*}
i \frac{d}{dt} &\langle \psi, \mathcal{U}^*_{\mathrm{gen}}(t;s) (\mathcal{N} +1)^k\mathcal{U}_{\mathrm{gen}}(t;s) \psi\rangle \\
=& \langle \psi, \mathcal{U}^*_{\mathrm{gen}}(t;s) \left[ \mathcal{G}_{\mathrm{gen},t} , (\mathcal{N}+1)^k \right] \mathcal{U}_{\mathrm{gen}}(t;s) \psi \rangle  \\
=& \sum_{i=1}^k \langle \psi, \mathcal{U}^*_{\mathrm{gen}}(t;s) (\mathcal{N}+1)^{i-1} \left[ \mathcal{G}_{\mathrm{gen},t} , \mathcal{N} \right] (\mathcal{N}+1)^{k-i}\mathcal{U}_{\mathrm{gen}}(t;s) \psi \rangle.
\end{align*}
Using the commutation relations and the definition \eqref{eq:G2_allg}, we find
\begin{align}
i& \frac{d}{dt} \langle \psi, \mathcal{U}^*_{\mathrm{gen}}(t;s) (\mathcal{N} +1)^k\mathcal{U}_{\mathrm{gen}}(t;s) \psi\rangle\notag \\
=& 2 \sum_{i=1}^k \int dxdy \; H_t^{(2)} (x,y) \; \langle \psi, \; \mathcal{U}^*_{\mathrm{gen}}(t;s)(\mathcal{N}+1)^{i-1} a_x^*a_y^* (\mathcal{N}+1)^{k-i}\mathcal{U}_{\mathrm{gen}}(t;s)\psi \rangle\notag \\
&+ 2 \sum_{i=1}^k \int dxdy \; \overline{H_t^{(2)} (x,y)} \; \langle \psi, \; \mathcal{U}^*_{\mathrm{gen}}(t;s)(\mathcal{N}+1)^{i-1} a_xa_y (\mathcal{N}+1)^{k-i}\mathcal{U}_{\mathrm{gen}}(t;s)\psi \rangle . \label{eq:commL2}
\end{align}
For the first term of the right hand side, the commutation relations yield
\begin{align*}
&\left\vert \int dxdy H_t^{(2)} (x;y) \langle \psi, \mathcal{U}^*_{\mathrm{gen}}(t;s)( \mathcal{N}+1)^{i-1} a_x^* a_y^* (\mathcal{N}+1)^{k-i} \mathcal{U}_{\mathrm{gen}}(t;s) \psi \rangle \right\vert \\
\leq&  \int dxdy | H_t^{(2)} (x;y) | \; \|  ( \mathcal{N}+1)^{(k+1)/2 -i}  a_x^* ( \mathcal{N}+1)^{i-1} \mathcal{U}_{\mathrm{gen}}(t;s) \psi \| \\
& \hspace{3cm} \times \| ( \mathcal{N} +1)^{i-(k-1)/2}a_y( \mathcal{N}+1)^{k-i} \mathcal{U}_{\mathrm{gen}}(t;s) \psi \| \\
\leq & C \| H_t^{(2)} \|_2 \; \| ( \mathcal{N}+1)^{k/2} \mathcal{U}_{\mathrm{gen}}(t;s)\psi \|^2 \leq C e^{C|t|} \;  \langle \psi,\mathcal{U}^*_{\mathrm{gen}}(t;s) ( \mathcal{N}+1)^{k} \mathcal{U}_{\mathrm{gen}}(t;s)\psi \rangle
\end{align*}
where $C$ depends on $k \in \mathbb{N}$. The second of the r.h.s. of \eqref{eq:commL2} follows in the same way. Hence, there exists $C>0$ such that 
\begin{align*}
\left \vert \frac{d}{dt} \langle \psi, \mathcal{U}^*_{\mathrm{gen}}(t;s) (\mathcal{N} +1)^k\mathcal{U}_{\mathrm{gen}}(t;s) \psi\rangle \right\vert \leq C  e^{C|t|}  \langle \psi, \mathcal{U}^*_{\mathrm{gen}}(t;s) (\mathcal{N} +1)^k\mathcal{U}_{\mathrm{gen}}(t;s) \psi\rangle . 
\end{align*}
Hence, the Gronwall inequality implies
\begin{align*}
\langle \psi, \mathcal{U}^*_{\mathrm{gen}}(t;s) (\mathcal{N}  +1)^k\mathcal{U}_{\mathrm{gen}}(t;s) \psi\rangle \leq C \exp \left( C \exp \left( C|t-s|\right) \right) \langle \psi, (\mathcal{N} +1)^k \psi \rangle .
\end{align*}

\end{proof}

For $f \in L^2( \mathbb{R}^3)$ let $\phi_a(f) = a^*(f) + a(f)$. In \cite[Proposition 3.4]{BSS} it is shown that for every $k \in \mathbb{N}$ and $\delta \in \mathbb{R}$ there exists a constant $C>0$ such that 
\begin{align}
\label{eq:phiaN}
\langle \psi, e^{-is \left( \phi_a (f) + \delta  \rd \Gamma (H) \right) } ( \mathcal{N} + 1)^k e^{is\left( \phi_a (f)+ \delta \rd \Gamma (H)\right)} \psi \rangle \leq C \langle \psi, ( \mathcal{N} + \alpha + s^2 \| f \|^2)^k \psi \rangle 
\end{align} 
for all $\psi \in \mathcal{F}$ and $\alpha \geq 1$. Hereafter we denote $\rd \Gamma (H) = \sum_{j=1}^N H^{(j)}$ for a bounded operator $H$ on $L^2( \mathbb{R}^3)$.
A similar estimate holds true for when replacing the creation and annihilation operators $a(f), a^*(f)$ with the modified ones $b^*(f), b(f)$ defined in \eqref{eq:modified c/a}. Let $\phi_b (f) = b^*(f) + b(f)$. In fact, as proven in \cite[Lemma 3.2]{RS_CLT}, for every $k \in \mathbb{N}$ there exists a constant $C>0$ such that
\begin{align}
\label{eq:phibN}
\langle \xi, e^{-i \phi_b(h) } \left( \mathcal{N}_+ (t) + 1\right)^k e^{i \phi_b(h)} \xi \rangle \leq C \langle \xi, \left( \mathcal{N}_+ (t)+ \alpha + \| f \|^2 \right)^k \xi \rangle 
\end{align}
for all $\xi \in \mathcal{F}^{\leq N}_+(t)$ and $\alpha \geq 1$ . 


\subsection{Proof of Proposition \ref{prop:bogo_U2}}
\label{subsec:proof_prop1}

It follows from Lemma \ref{lemma:U2} that it is enough to prove Proposition \ref{prop:bogo_U2} with respect the dynamics $\mathcal{U}_{\mathrm{gen}} (t;s)$.

First, we prove that for $f \in L^2( \mathbb{R}^3)$  the Fock space vectors  $\mathcal{U}^*_{\mathrm{gen}}(t;s) a^*(f) \mathcal{U}_{\mathrm{gen}}(t;s) \Omega $ and $\mathcal{U}_{\mathrm{gen}}^*(t;s) a(f) \mathcal{U}_{\mathrm{gen}}(t;s) \Omega $ are elements of the one-particle sector. The following Lemma is a generalization of \cite[Lemma 8.1]{CLS}.

\begin{lemma}
\label{lemma:UPk}
Let $\mathcal{U}_{\mathrm{gen}} (t;s)$ be the dynamics defined Definition \ref{def:U_gen}. Then for all $f \in L^2( \mathbb{R} )$
\begin{align*}
\mathcal{U}^*_{\mathrm{gen}}(t;s) a^\sharp (f) \mathcal{U}_{\mathrm{gen}}(t;s) \Omega = \mathcal{P}_1 \mathcal{U}^*_{\mathrm{gen}}(t;s) a^\sharp (f) \mathcal{U}_{\mathrm{gen}}(t;s) \Omega
\end{align*}
where either $a^\sharp (f) = a(f)$ or $a^\sharp (f) = a^*(f)$ and where $\mathcal{P}_1$ denotes the projection onto the one particle sector of the Fock space $\mathcal{F}$. 
\end{lemma}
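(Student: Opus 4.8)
The plan is to show that the vector $\mathcal{U}^*_{\mathrm{gen}}(t;s) a^\sharp(f) \mathcal{U}_{\mathrm{gen}}(t;s)\Omega$ has no components in sectors with particle number different from one, and this is most naturally done by differentiating in $t$ and controlling the flow of the number operator. First I would fix $s$, abbreviate $\mathcal{U}(t) = \mathcal{U}_{\mathrm{gen}}(t;s)$, and consider the family of vectors $\Phi^\sharp_t(f) := \mathcal{U}^*(t) a^\sharp(f) \mathcal{U}(t)\Omega$. At $t=s$ we have $\mathcal{U}(s) = \mathds{1}$, so $\Phi^\sharp_s(f) = a^\sharp(f)\Omega \in \mathcal{P}_1 \mathcal{F}$ (for $a^\sharp = a^*$ this is $f$ in the one-particle sector; for $a^\sharp = a$ it is $0$, which is trivially in $\mathcal{P}_1\mathcal{F}$). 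The strategy is then to prove that the subspace $\mathcal{P}_1\mathcal{F}$ is invariant under the flow generated by $\mathcal{U}(t)$ acting by conjugation on $a^\sharp(f)$, i.e. that $(\mathds{1} - \mathcal{P}_1)\Phi^\sharp_t(f) = 0$ for all $t$.

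The key computation is the time derivative. Since $i\partial_t \mathcal{U}(t) = \mathcal{G}_{\mathrm{gen},t}\mathcal{U}(t)$, we get
\begin{align*}
i \partial_t \Phi^\sharp_t(f) = \mathcal{U}^*(t) \big[ a^\sharp(f), \mathcal{G}_{\mathrm{gen},t} \big] \mathcal{U}(t) \Omega.
\end{align*}
Now $\mathcal{G}_{\mathrm{gen},t}$ from \eqref{eq:G2_allg} is quadratic in creation/annihilation operators (plus the constant $c$, which commutes with everything), so the commutator $[a^\sharp(f), \mathcal{G}_{\mathrm{gen},t}]$ is \emph{linear} in creation and annihilation operators: commuting $a^\sharp(f)$ with $\int \nabla a^* \nabla a$, with $\int H^{(1)}_t a^* a$, and with $\int (H^{(2)}_t a^* a^* + \overline{H^{(2)}_t} a a)$ each produces a single creation or annihilation operator smeared against an $L^2$ function built from $f$, $H^{(1)}_t$, $H^{(2)}_t$ and $-\Delta$. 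The bounds \eqref{ass1} together with whatever regularity is available for $f$ (one should restrict first to a dense set, e.g. $f \in H^2$ or Schwartz, and extend by density and the boundedness claimed in Proposition~\ref{prop:bogo_U2}, or alternatively absorb $\nabla a^* \nabla a$ into a form bound) ensure these smearing functions are genuinely in $L^2$. Therefore $i\partial_t \Phi^\sharp_t(f) = \mathcal{U}^*(t)\, A^\sharp(g_t)\, \mathcal{U}(t)\Omega$ for suitable $g_t \in L^2$, where $A^\sharp$ denotes a sum of one creation and one annihilation operator. Crucially, conjugating a linear expression $A^\sharp(g_t)$ by $\mathcal{U}(t)$ and applying to $\Omega$ again lands in $\mathcal{P}_1\mathcal{F}$ — but this is exactly the statement we are trying to prove, so instead I would close the argument by a Duhamel/Gronwall estimate directly on the "bad" component. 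Writing $v_t := (\mathds{1}-\mathcal{P}_1)\Phi^\sharp_t(f)$, project the derivative identity: $i\partial_t v_t = (\mathds{1}-\mathcal{P}_1)\mathcal{U}^*(t)[a^\sharp(f),\mathcal{G}_{\mathrm{gen},t}]\mathcal{U}(t)\Omega$. Since $[a^\sharp(f),\mathcal{G}_{\mathrm{gen},t}]$ maps the $n$-particle sector into the $(n\pm1)$-particle sectors, and $\mathcal{U}(t)$ preserves parity and nothing else a priori, one must iterate: the cleanest route is to establish the full invariance by expanding $\mathcal{U}(t)\Omega$ and $\mathcal{U}^*(t) a^\sharp(f)\mathcal{U}(t)\Omega$ in a norm-convergent series in $\mathcal{G}_{\mathrm{gen}}$ and checking sector-by-sector that only the one-particle component survives; the convergence of these series is exactly what \eqref{ass1} and Lemma~\ref{lemma:U2NU2^*} (propagation of moments of $\mathcal{N}$) are there to guarantee.

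An alternative and perhaps cleaner formulation, which I would actually adopt: use that $\mathcal{U}(t)$ is itself implemented by a Bogoliubov-type transformation at the level of the quadratic generator, so $\mathcal{U}^*(t) a^\sharp(f) \mathcal{U}(t) = A(\Theta_0(t) (f\text{-data}))$ is linear in $a^*, a$ with coefficients solving a linear (infinite-dimensional, $t$-dependent) ODE system — this is essentially the content of Proposition~\ref{prop:bogo_U2}, of which the present Lemma is the "$\Omega$-sector" shadow. Concretely, I would show by Gronwall that the map $t \mapsto (\langle \Omega, \mathcal{N}^2 \mathcal{U}^*(t) a^\sharp(f)\mathcal{U}(t)\Omega \rangle)$ stays controlled and that the non-one-particle components vanish identically because at $t=s$ they vanish and their derivative is a bounded linear operator applied to them (a closed linear ODE with zero initial data). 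The cleanest packaging: define $w_t \in \bigoplus_{n\ne1}\mathcal{P}_n\mathcal{F}$ as the "excess" and show $\|w_t\| \le \int_s^t C e^{C|\tau|}(\|w_\tau\| + 0)\,d\tau$ so $w_t \equiv 0$ by Gronwall — the "$+0$" being the assertion that the one-particle part, when hit by the commutator and reprojected off $\mathcal{P}_1$, contributes nothing, which follows because $[a^\sharp(f),\mathcal{G}_{\mathrm{gen},t}]$ applied to a one-particle vector, then conjugated and applied to $\Omega$... — and here we see the argument genuinely needs the inductive/series structure, not a naive one-step Gronwall.

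The main obstacle is precisely this bootstrapping: the derivative of the one-particle component couples to the three-particle component (via the $a^* a^*$ term in $\mathcal{G}_{\mathrm{gen},t}$ acting after conjugation), so one cannot prove invariance of $\mathcal{P}_1\mathcal{F}$ by a one-line Gronwall argument on a single sector. The honest fix is to invoke the norm-convergent Dyson expansion of $\mathcal{U}(t)$ — legitimate because the quadratic generator with the bounds \eqref{ass1} generates a well-defined dynamics on domains of powers of $\mathcal{N}$ (Lemma~\ref{lemma:U2NU2^*}) — and then observe that in $\mathcal{U}^*(t)a^\sharp(f)\mathcal{U}(t)\Omega$ each term of the expansion is a product of quadratic operators and one linear operator applied to the vacuum, hence changes the particle number by an \emph{odd} amount and, by an elementary bookkeeping of how $a^*a^*$, $a^*a$, $aa$ shift sectors starting from $\Omega$, can only produce a one-particle vector (every $a a$ or $a^* a$ to the right of everything annihilates $\Omega$ or is balanced, and the single unpaired $a^\sharp(f)$ fixes the parity and, combined with the fact that excitations created can only be re-annihilated in pairs, forces the final sector to be $n=1$). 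I expect the write-up of this combinatorial/parity argument — making rigorous that "only $n=1$ survives" term by term while controlling the series in a suitable $\mathcal{N}$-weighted norm — to be the technical heart of the proof; everything else (the commutator computation, the $L^2$ bounds on the smearing functions, the Gronwall for $\mathcal{N}$-moments) is routine given \eqref{ass1}, \eqref{eq:TNT*}, \eqref{eq:phiaN} and Lemma~\ref{lemma:U2NU2^*}.
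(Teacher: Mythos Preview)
Your proposal circles the right ideas but does not close, and the fallback you settle on (Dyson expansion plus sector combinatorics) is both unnecessary and technically problematic. Two concrete gaps:

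\textbf{(1) Closing the Gronwall.} You correctly compute that $[a^\sharp(f),\mathcal{G}_{\mathrm{gen},t}]$ is \emph{linear} in $a,a^*$, i.e.\ equals $a(g_{1,t})+a^*(g_{2,t})$ for some $g_{1,t},g_{2,t}\in L^2$ with $\|g_{i,t}\|_2\le Ce^{C|t|}\|f\|_2$. Hence
\[
i\partial_t\,\Phi^\sharp_t(f)=\Phi_t(g_{1,t})+\Phi^*_t(g_{2,t}),
\]
which is again a sum of objects of the \emph{same type}. You then talk yourself out of a direct Gronwall, but in fact it closes immediately once you enlarge the quantity you estimate: test against a fixed $m$-particle vector $\psi$ with $m\neq 1$ (or equivalently apply $\mathds{1}-\mathcal{P}_1$), take the supremum over $\|f\|_2\le 1$, \emph{and} sum over both choices $\sharp\in\{\cdot,*\}$. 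Call this $F(t)$. Then $F(s)=0$, and the identity above yields $F(t)\le C\int_s^t e^{C|\tau|}F(\tau)\,d\tau$ because the new $g_{i,t}$ are controlled in norm by $\|f\|_2$. Gronwall finishes. This is exactly what the paper does (following \cite{CLS}); your worry that ``the one-particle component couples to the three-particle component'' dissolves once you realise the derivative stays in the span of $\{\Phi_t(g),\Phi^*_t(g):g\in L^2\}$.

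\textbf{(2) The kinetic term.} You note that $[a^\sharp(f),\mathcal{K}]$ involves $-\Delta f$ and propose restricting to $f\in H^2$ and extending by density. This does not work as stated: the functions $g_{i,t}$ produced at the next step are built from $H^{(2)}_tf$, for which you only have $L^2$ control via \eqref{ass1}, so you cannot iterate inside $H^2$. The paper's device is to pass to the interaction picture: since $e^{i\mathcal{K}\tau}a^\sharp(f)e^{-i\mathcal{K}\tau}=a^\sharp(e^{\mp i\Delta\tau}f)$ with $\|e^{\mp i\Delta\tau}f\|_2=\|f\|_2$, one may rewrite $F(t)$ with these insertions for free, and then differentiation produces only the commutator with $\mathcal{G}_{\mathrm{gen},t}-\mathcal{K}$, which is bounded by \eqref{ass1}. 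This removes the regularity issue entirely.

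Finally, your proposed Dyson series is not available here: $\mathcal{G}_{\mathrm{gen},t}$ contains the unbounded $\mathcal{K}$, so there is no norm-convergent expansion of $\mathcal{U}_{\mathrm{gen}}(t;s)$, and Lemma~\ref{lemma:U2NU2^*} (propagation of $\mathcal{N}$-moments) does not supply one. The parity bookkeeping you sketch only shows the result lies in \emph{odd} sectors, not in the one-particle sector. With (1) and (2) above the proof is a clean one-step Gronwall; no series or combinatorics are needed.
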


\begin{proof}
The proof follows the arguments of the proof of \cite[Lemma 8.1]{CLS}. For $m \in \mathbb{N}$, $m \not= 1$, we define for arbitrary$m$-particle wave-function $\psi \in \mathcal{F}$ with $\| \psi \| =1$ the function
\begin{align*}
F(t) =& \sup_{\| f \|_2 \leq 1} \vert \langle \psi, \mathcal{U}_{\mathrm{gen}}^*(t;s) a(f) \mathcal{U}_{\mathrm{gen}}(t;s) \Omega \rangle \vert \\
&+ \sup_{\| f \|_2 \leq 1} \vert \langle \psi, \mathcal{U}_{\mathrm{gen}}^*(t;s) a^*(f) \mathcal{U}_{\mathrm{gen}}(t;s) \Omega \rangle \vert .
\end{align*}
Since $m \not= 1$, we observe that $F(s)=0$ and furthermore 
\begin{align*}
e^{i\mathcal{K} t} a(f) e^{-i\mathcal{K}t} =a ( e^{-i\Delta t} f)= a( f_t),
\end{align*}
using the notation $f_t = e^{-it \Delta } f$. 
Since $e^{-i \Delta t}$ is a unitary operator, we find
\begin{align*}
F(t) =&\sup_{\| f \|_2 \leq 1}  \vert \langle \psi, \mathcal{U}_{\mathrm{gen}}^*(t;s) e^{i\mathcal{K} t}a(f)e^{-i\mathcal{K} t} \mathcal{U}_{\mathrm{gen}}(t;s) \Omega \rangle \vert \\
&+ \sup_{ \| f \|_2\leq1}  \vert \langle \psi, \mathcal{U}_{\mathrm{gen}}^*(t;s) e^{i\mathcal{K} t}a^*(f) e^{-i\mathcal{K} t}\mathcal{U}_{\mathrm{gen}}(t;s) \Omega \rangle \vert .
\end{align*}
Then,
\begin{align*}
i \frac{d}{dt}&\langle \psi, \mathcal{U}^*_{\mathrm{gen}}(t;s) e^{i\mathcal{K} t}a(f)e^{-i\mathcal{K} t} \mathcal{U}_{\mathrm{gen}}(t;s) \Omega \rangle \\
 =& \langle \psi, \mathcal{U}_{\mathrm{gen}}^*(t;s) \left[ a(f_t) , \mathcal{G}_{\mathrm{gen},t} - \mathcal{K}\right] \mathcal{U}_{\mathrm{gen}}(t;s) \Omega \rangle
\end{align*}
 and the definition of $\mathcal{G}_{\mathrm{gen},t}$ in \eqref{eq:G2_allg} leads to
\begin{align*}
\left[ a(f_t),  \mathcal{G}_{\mathrm{gen},t}- \mathcal{K} \right] =& \int dxdy \; \left( f_t(x) H_t^{(1)}(x;y)\right) a_y \\
&+  \int dxdy \left( H_t^{(2)}(x;y) f_t(x) + H_t^{(2)}(y;x) f_t (x)\right) a^*_y  .
\end{align*}
The assumption \eqref{ass1} implies on one hand
\begin{align*}
\| H_t^{(1)} f_t \|_2  \leq C e^{C|t|} \; \| f_t \|_2
\end{align*}
and on the other hand
\begin{align*}
\| H_t^{(2)} f_t \|_2 \leq  \| f_t \|_2 \| H_t^{(2)} \|_2 \leq C e^{C|t|}\;  \| f_t \|_2 .
\end{align*}
 Hence,
\begin{align*}
\left\vert\langle \psi, \mathcal{U}^*_{\mathrm{gen}}(t;s) e^{i\mathcal{K} t}a(f)e^{-i\mathcal{K} t} \mathcal{U}_{\mathrm{gen}}(t;s) \Omega \rangle\right\vert \leq C \int_0^t d\tau \; e^{C|\tau|} \; F(\tau)
\end{align*}
and analogously
\begin{align*}
\left\vert \langle \psi, \mathcal{U}^*_{\mathrm{gen}}(t;s) e^{i\mathcal{K} t}a^*(f)e^{-i\mathcal{K} t} \mathcal{U}_{\mathrm{gen}}(t;s) \Omega \rangle\right\vert \leq C\int_0^t d\tau\; e^{C| \tau|} \; F(\tau).
\end{align*}
Note that these bounds are independent of $f \in L^2( \mathbb{R}^3)$. Thus, 
\begin{align*}
0 \leq F(t) \leq C \int_0^t d\tau \;e^{C| \tau |} \; F(\tau).
\end{align*}
Using the bounds $\| a^\sharp(f) \psi \| \leq \| f \|_2 \| ( \mathcal{N}+1)^{1/2} \psi \|$, we obtain
\begin{align*}
F(t) \leq 2 \| ( \mathcal{N}+1)^{1/2} \mathcal{U}_{\mathrm{gen}}(t;s) \Omega \| \leq C \exp \left( \exp ( C |t-s| )\right)  \langle \psi, (\mathcal{N} +1) \psi \rangle .
\end{align*}
Here, we used Lemma \ref{lemma:U2NU2^*}  for the last estimate. Since $F(s)=0$, the Gronwall inequality implies $F(t) =0$ for all $t \in \mathbb{R}$. 
\end{proof}


\begin{proof}[Proof of Proposition \ref{prop:bogo_U2}]. We prove the Proposition with respect to the dynamics $\mathcal{U}_{\mathrm{gen}}(t;s)$ defined inDefinition \ref{def:U_gen}. Then, Proposition \ref{prop:bogo_U2} follows from Lemma \ref{lemma:U2}. \\

The proof follows the arguments of the proof of \cite[Theorem 2.2]{BKS}. Let $\mathcal{P}_k$ denote the projection onto the $k$-particle sector $\mathcal{F}_k$ of the Fock space.  It follows from Lemma that \ref{lemma:UPk}
\begin{align*}
\mathcal{P}_k \mathcal{U}^*_{\mathrm{gen}}(t;s) a^*(f) \mathcal{U}_{\mathrm{gen}}(t;s) \Omega =0, \hspace{0.3cm}
\mathcal{P}_k \mathcal{U}^*_{\mathrm{gen}}(t;s) a^*(f) \mathcal{U}_{\mathrm{gen}}(t;s) \Omega =0
\end{align*}
for all  $f \in L^2( \mathbb{R}^3)$ and $k \not= 1$ . Thus, there exist linear operators $U(t;s), V(t;s) : L^2( \mathbb{R}^3) \rightarrow L^2( \mathbb{R}^3)$ such that
\begin{align*}
\mathcal{U}_{\mathrm{gen}}^* (t;s) a^*(f) \mathcal{U}_{\mathrm{gen}}(t;s) \Omega &= a^* \left( U(t;s) f\right) \Omega, \\
\mathcal{U}_{\mathrm{gen}}^* (t;s) a^*(f) \mathcal{U}_{\mathrm{gen}}(t;s) \Omega &= a^* \left( JV(t;s) f\right) \Omega
\end{align*}
where $J:L^2( \mathbb{R}^3) \rightarrow L^2( \mathbb{R}^3) $ denotes the anti linear operator defined by $Jf= \overline{f}$ for all $f \in L^2( \mathbb{R}^3)$. The operators $U(t;s)$ and $V(t;s)$ are bounded in $L^2( \mathbb{R}^3)$. This follows from Lemma \ref{lemma:U2NU2^*}, since
\begin{align*}
\| U(t;s) f \|&= \| a^*\left( U(t;s)f \right) \Omega \| = \| a^*(f) \mathcal{U}_{\mathrm{gen}}(t;s) \Omega\|\\ 
&\leq \| f \| \; \| ( \mathcal{N}+1)^{1/2} \mathcal{U}_{\mathrm{gen}} (t;s) \Omega \| \leq C \exp (c |t|)
\end{align*}
and
\begin{align*}
\| V(t;s) f \| =& \| a^* \left( JV(t;s) f \right) \Omega \| = \| a(f) \mathcal{U}_{\mathrm{gen}}(t;s) \Omega \| \\
&\leq \| f \| \; \| \mathcal{N}^{1/2} \mathcal{U}_{\mathrm{gen}}(t;s) \Omega \| \leq C e^{K |t| }. 
 \end{align*}
We define the bounded operator $\Theta$ on $ L^2( \mathbb{R}^3) \oplus L^2( \mathbb{R}^3)$ through
\begin{align*}
\Theta (t;s) = \begin{pmatrix}
U(t;s) & JV(t;s) J \\
V(t;s) & JU(t;s) J 
\end{pmatrix}.
\end{align*}
Then 
\begin{align}
\label{eq:bogo_Omega}
\mathcal{U}^*_{\mathrm{gen}} (t;s) A(f,g) \mathcal{U}_{\mathrm{gen}}(t;s) \Omega= A\left( \Theta(t;s) (f,g) \right) \Omega
\end{align}
for all $f,g \in L^2( \mathbb{R}^3)$. For fixed  $\psi \in \mathcal{D}( \mathcal{K} + \mathcal{N} )$, $g \in L^2( \mathbb{R}^3 )$, $s \in \mathbb{R}$ and any bounded operator $\mathcal{M}$ on $\mathcal{F}$ with $\mathcal{M}\mathcal{D} ( \mathcal{K} + \mathcal{N} ) \subset \mathcal{D} ( \mathcal{K} + \mathcal{N} )$, we define furthermore
\begin{align*}
F(t) = \sum_{\sharp} \sup_{\| f \|_2 \leq 1} \left\| \left[ \left[ \mathcal{U}^*_{\mathrm{gen}}(t;s) a^\sharp (f) \mathcal{U}_{\mathrm{gen}} (t;s), \; a^\flat (h) \right], \; \mathcal{M} \right] \; \psi \right\|.
\end{align*}
Here, $a^\sharp, a^\flat$ are either creation or annihilation operators. Since $e^{-i \mathcal{K} t} a^\sharp (f) e^{i \mathcal{K} t} = a^\sharp ( e^{it \Delta} f)$ and $\| e^{it \Delta} f \|_2 = \| f\|_2$ for all$ f \in L^2( \mathbb{R}^3)$, we can write
\begin{align*}
F(t) = \sum_{\sharp} \sup_{\| f \|_2 \leq 1} \left\| \left[ \left[ \mathcal{U}^*_{\mathrm{gen}}(t;s) e^{-i \mathcal{K} (t-s)}a^\sharp(f) e^{i \mathcal{K} (t-s)} \mathcal{U}_{\mathrm{gen}}(t;s), \; a^\flat (h) \right], \; \mathcal{M}  \right] \; \psi  \right\|.
\end{align*}
The commutation relations imply that $F(s)=0$. Furthermore, 
\begin{align*}
i &\frac{d}{dt} \left[ \left[ \mathcal{U}_{\mathrm{gen}}^*(t;s) e^{-i \mathcal{K} (t-s)}a^\sharp(f) e^{i \mathcal{K} (t-s)} \mathcal{U}_{\mathrm{gen}}(t;s), \; a^\flat (h) \right], \; \mathcal{M} \right] \; \psi \\
=& \left[ \left[ \mathcal{U}^*_{\mathrm{gen}}(t;s) \left[ ( \mathcal{G}_{\mathrm{gen},t} - \mathcal{K}), e^{-i \mathcal{K} (t-s)}a^\sharp(f) e^{i \mathcal{K} (t-s)} \right]  \mathcal{U}_{\mathrm{gen}}(t;s), \; a^\flat (h) \right], \; \mathcal{M}  \right] \; \psi\\
=& \left[ \left[ \mathcal{U}^*_{\mathrm{gen}}(t;s) \left[ ( \mathcal{G}_{\mathrm{gen},t} - \mathcal{K}), a^\sharp( e^{-i \Delta (t-s)}f)  \right]  \mathcal{U}_{\mathrm{gen}}(t;s), \; a^\flat (h) \right], \;\mathcal{M}  \right] \; \psi
\end{align*}
using the notation $f_t = e^{-i \Delta t} f$. Analogous calculations as in the proof of Lemma \ref{lemma:UPk} show that
\begin{align*}
\left[ ( \mathcal{G}_{\mathrm{gen},t} - \mathcal{K} ) , a^\sharp (f_t) \right] = a(h_{1,t}) + a^*(h_{2,t}).
\end{align*}
The assumption \eqref{ass1} implies $\|h_{i,t} \|_2 \leq C e^{C |t|}  \| f \|_2 $ for $i =1,2$. Thus,
\begin{align*}
\left\| \left[ \left[ \mathcal{U}^*_{\mathrm{gen}}(t;s) a^\sharp (f) \mathcal{U}_{\mathrm{gen}} (t;s), \; a^\flat (g) \right], \; \mathcal{M} \right] \; \psi \right\| \leq C \int_0^t d \tau \; e^{C | \tau |} \; F( \tau)
\end{align*} 
for all $f \in L^2( \mathbb{R}^3)$ and therefore
\begin{align*}
0 \leq F(t) \leq C \int_0^t d\tau\; e^{C | \tau |} \; F( \tau) .
\end{align*}
Since $F(s)=0$, the Gronwall inequality implies $F(t) = 0 $ for all $t \in \mathbb{R}$. Hence, 
 \begin{align}
 \label{eq:bogo_commM}
\left[ \left[ \mathcal{U}_{\mathrm{gen}}^*(t;s) A(f_1,h_1) \mathcal{U}_{\mathrm{gen}}(t;s), \;A(f_2,h_2) \right], \; \mathcal{M} \right] =0
 \end{align}
 for every $f_1,f_2,h_1,h_2 \in L^2( \mathbb{R}^3)$ and every bounded operator $\mathcal{M}$ on the Fock space $\mathcal{F}$ such that $\mathcal{M} \mathcal{D} \left( \mathcal{K} + \mathcal{N}\right) \subset  \mathcal{D} \left( \mathcal{K} + \mathcal{N} \right)$. We claim, that
 \begin{align}
 \label{eq:bogo_claim1}
 \langle \psi,& \left[ \mathcal{U}^*_{\mathrm{gen}}(t;s) A(f_1,h_1) \mathcal{U}_{\mathrm{gen}}(t;s), \; A(f_2, h_2) \right] \psi \rangle\notag\\
 & = \langle \Omega, \; \left[ \mathcal{U}_{\mathrm{gen}}^*(t;s) A(f_1,h_1) \mathcal{U}_{\mathrm{gen}}(t;s), \; A(f_2, h_2) \right] \Omega \rangle
 \end{align}
for all $\psi \in \mathcal{D}\left( \mathcal{K} + \mathcal{N} \right)$ with $\| \psi \|=1$. Combining \eqref{eq:bogo_Omega} with \eqref{eq:bogo_claim1}, we find
 \begin{align*}
  \langle \psi,& \left[ \mathcal{U}_{\mathrm{gen}}(t;s) A(f_1,h_1) \mathcal{U}_{\mathrm{gen}}(t;s), \; A(f_2, h_2) \right] \psi \rangle \\
  =& \langle \Omega, \; \left[ A\left( \Theta(t;s) (f_1,h_1) \right), \; A(f_2, h_2) \right] \Omega \rangle \\
=& \left( \Theta(t;s) (f_1, h_1), S(f_2, h_2) \right)_{L^2 \oplus L^2} 
 \end{align*}
 where $S$ is defined in \ref{eq:property_A2} . It follows that
 \begin{align}
 \label{eq:bogo_commAA}
\left[ \mathcal{U}^*_{\mathrm{gen}}(t;s) A(f_1,h_1) \mathcal{U}_{\mathrm{gen}}(t;s) - A \left( \Theta(f_1,h_1) \right), \; A(f_2, h_2) \right] =0,
 \end{align}
 for all $f_1,h_1,f_2,h_2 \in L^2( \mathbb{R}^3)$. Consider now 
 $$R:= \mathcal{U}_{\mathrm{gen}}^*(t;s) A(f_1,h_1) \mathcal{U}_{\mathrm{gen}}(t;s) - A \left( \Theta(f_1,h_1) \right). $$
On the one hand, \eqref{eq:bogo_Omega} shows that $R \Omega =0$ and on the other hand it follows from \eqref{eq:bogo_commAA}, that $R$ commutes with any creation and annihilation operator. Since states of the form $a^*(f_1) \dots a^*(f_n) \Omega$ build a basis of the Fock space $\mathcal{F}$, we conclude
 \begin{align*}
 \mathcal{U}^*_{\mathrm{gen}}(t;s) A(f,h) \mathcal{U}_{\mathrm{gen}}(t;s) = A\left( \Theta(t;s) (f,h) \right) 
 \end{align*}
 for all $f,g \in L^2( \mathbb{R}^3)$.

Now, we are left with proving \eqref{eq:bogo_claim1}. For this, note that \eqref{eq:bogo_commM} implies
 \begin{align*}
& \left[ \left[ \mathcal{U}^*_{\mathrm{gen}}(t;s) A(f_1,h_1) \mathcal{U}_{\mathrm{gen}}(t;s), \;A(f_2,h_2) \right], \; P_\psi \right] \\
 &= \left[ \left[ \mathcal{U}^*_{\mathrm{gen}}(t;s) A(f_1,h_1) \mathcal{U}_{\mathrm{gen}}(t;s), \;A(f_2,h_2) \right], \; P_\Omega \right]= 0
 \end{align*}
 where $P_\psi$ resp. $P_\Omega$ denote the projection on the subspace of $\mathcal{F}$ spanned by $\psi$ resp. $\Omega$. Therefore, on one hand
 \begin{align*}
 \langle \psi,  & \left[ \mathcal{U}^*_{\mathrm{gen}}(t;s) A(f_1,h_1) \mathcal{U}_{\mathrm{gen}}(t;s), \;A(f_2,h_2) \right] \Omega \rangle \\
 =&  \langle \psi,   \left[ \mathcal{U}^*_{\mathrm{gen}}(t;s) A(f_1,h_1) \mathcal{U}_{\mathrm{gen}}(t;s), \;A(f_2,h_2) \right] P_\psi \Omega \rangle \\
 =&  \langle \psi,   \left[ \mathcal{U}^*_{\mathrm{gen}}(t;s) A(f_1,h_1) \mathcal{U}_{\mathrm{gen}}(t;s), \;A(f_2,h_2) \right] \psi \rangle \; \langle \psi, \Omega \rangle
 \end{align*}
 and on the other hand
 \begin{align*}
  \langle \psi, &  \left[ \mathcal{U}^*_{\mathrm{gen}}(t;s) A(f_1,h_1) \mathcal{U}_{\mathrm{gen}}(t;s), \;A(f_2,h_2) \right] \Omega \rangle \\
   =&\langle \psi, \Omega \rangle \;   \langle \Omega,   \left[ \mathcal{U}^*_{\mathrm{gen}}(t;s) A(f_1,h_1) \mathcal{U}_{\mathrm{gen}}(t;s), \;A(f_2,h_2) \right] \Omega \rangle.
 \end{align*}
 Assuming that $\langle \psi, \Omega \rangle \not=0$, the claim \eqref{eq:bogo_claim1} follows. If $\langle \psi, \Omega \rangle =0$, we repeat the same arguments with $\widetilde{\psi} = \frac{1}{\sqrt{2}} ( \psi + \Omega)$. This leads to \eqref{eq:bogo_claim1}.

It remains to prove the properties \eqref{eq:bogo_prop}. Since for all $f,g \in L^2( \mathbb{R}^3)$
\begin{align*}
\left( A( \Theta(t;s) (f,h) ) \right)^* =& \left( \mathcal{U}^*_{\mathrm{gen}}(t;s) A( f,h) \mathcal{U}_{\mathrm{gen}}(t;s) \right)^* \\
=& \mathcal{U}^*_{\mathrm{gen}}(t;s) A( f,h)^* \mathcal{U}_{\mathrm{gen}}(t;s)\\
=& \mathcal{U}^*_{\mathrm{gen}}(t;s) A( Jf,Jh) \mathcal{U}_{\mathrm{gen}}(t;s) \\
=& A( \Theta(t;s) (Jf, Jh) )
\end{align*} 
the first  property follows. Furthermore, from 
 \begin{align*}
& \left[ A( \Theta(t;s) (f_1, h_1)), A( \Theta(t;s) (f_2,h_2)) \right] \\
& \hspace{0.2cm} = \left[ \mathcal{U}^*_{\mathrm{gen}}(t;s) A(f_1, h_2) \mathcal{U}_{\mathrm{gen}}(t;s), \mathcal{U}^*(t;s) A(f_2,h_2) \mathcal{U}(t;s)\right] \\
& \hspace{0.2cm} = \mathcal{U}^*_{\mathrm{gen}}(t;s) \left[ A(f_1, h_1), A(f_2,h_2) \right] \mathcal{U}_{\mathrm{gen}}(t;s) \\
 & \hspace{0.2cm}= \langle (f_1,h_1), S(f_2,h_2) \rangle
 \end{align*}
 we deduce the second property.
 
\end{proof}


\subsection{Proof of Theorem \ref{thm:CLT}}

The proof uses ideas introduced in \cite{RS_CLT}. We consider the expectation value
\begin{align*}
\mathbb{E}_{\psit} &\left[ g_1( \mathcal{O}_{1,N,t} ) \dots g_k( \mathcal{O}_{k,N,t}) \right] \notag\\
=& \langle \psit, \; g_1 ( \mathcal{O}_{1,N,t} ) \dots g_k ( \mathcal{O}_{k,N,t} ) \psit \rangle \notag\\
=& \int ds_1 \dots ds_k \; \widehat{g}_1( s_1) \dots \widehat{g}_k ( s_k ) \; \langle \psit, \;  e^{is_1  \mathcal{O}_{1,N,t}} \dots  e^{is_k \mathcal{O}_{k,N,t}} \psit \rangle .
\end{align*}
The norm approximation \eqref{eq:normapprox} from \cite{BNNS} implies, that for every $\alpha < \min \lbrace \beta/2, (1-\beta)/2 \rbrace$ there exists $C>0$ such that
\begin{align}
\Bigg\vert  \mathbb{E}_{\psit} & \left[ g_1( \mathcal{O}_{1,N,t} ) \dots g_k( \mathcal{O}_{k,N,t}) \right] \notag \\
&- \int ds_1 \dots ds_k \;  \widehat{g}_1( s_1) \dots \widehat{g}_k ( s_k ) \notag\\
& \hspace{2cm} \times \langle \mathcal{U}_{\varphi_{N,t}}^* T_{N,t}^* \mathcal{U}_{2}(t;0) \Omega , \; e^{is_1  \mathcal{O}_{1,N,t}} \dots  e^{is_k \mathcal{O}_{k,N,t}} \mathcal{U}_{\varphi_{N,t}}^* T_{N,t}^* \mathcal{U}_{2}(t;0) \Omega \rangle\Bigg\vert \notag\\
\leq& CN^{-\gamma} \prod_{j=1}^k \| \widehat{g}_j \|_1. \label{eq:proof step0_1}
\end{align}
We are hence left with computing the expectation value 
$$\langle \mathcal{U}_{\varphi_{N,t}}^* T_{N,t}^* \mathcal{U}_{2}(t;0) \Omega , \; e^{is_1  \mathcal{O}_{1,N,t}} \dots  e^{is_k \mathcal{O}_{k,N,t}} \mathcal{U}_{\varphi_{N,t}}^* T_{N,t}^* \mathcal{U}_{2}(t;0) \Omega \rangle.$$ 
We split this computation in several Lemmata. \\ 


\begin{lemma}[Action of the unitary $\mathcal{U}_{\varphi_{N,t}}$] \label{lemma:step001} Let $T_{N,t}$ and $\mathcal{U}_2(t;0)$ be as defined in \eqref{eq:def_bogo} resp. \eqref{eq:def_U2}. Moreover, let $\xi_{N,t} = T_{N,t}^* \mathcal{U}_{2}(t;0) \Omega$. Then, using the same notations as in Theorem \ref{thm:CLT}, there exists $C>0$ such that
\begin{align*}
\bigg\vert \langle \mathcal{U}_{\varphi_{N,t}}&  \xi_{N,t}, \;  e^{is_1  \mathcal{O}_{1,N,t}} \dots  e^{is_k \mathcal{O}_{k,N,t}} \mathcal{U}_{\varphi_{N,t}} \xi_{N,t}\rangle\notag \\
&- \langle  \xi_{N,t},\;  e^{is_1  \phi_b\left( q_{N,t} O_{1} \varphi_{N,t}\right) } \dots  e^{is_k \phi_b \left( q_{N,t} O_{k} \varphi_{N,t}\right)}   \xi_{N,t}\rangle\bigg\vert \notag\\
\leq& \frac{C}{\sqrt{N}}\sum_{m=1}^k |s_m|  \| O_m \| \left( 1 + \sum_{j=m}^k s_j^2 \|  O_j \|^2  \right).
\end{align*}
\end{lemma}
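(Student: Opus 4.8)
The plan is to express the difference on the left-hand side as a sum of contributions in which the unitary $\mathcal{U}_{\varphi_{N,t}}$ has been moved past the exponentials one factor at a time, and to control each contribution using the properties \eqref{eq:propU} of $\mathcal{U}_{\varphi_{N,t}}$ together with the $\mathcal{N}_+$-moment bounds from Lemma \ref{lemma:U2NU2^*}. First I would use the unitarity of $\mathcal{U}_{\varphi_{N,t}}$ to write the first inner product as $\langle \xi_{N,t}, \; \mathcal{U}_{\varphi_{N,t}}^* e^{is_1 \mathcal{O}_{1,N,t}} \mathcal{U}_{\varphi_{N,t}} \cdots \mathcal{U}_{\varphi_{N,t}}^* e^{is_k \mathcal{O}_{k,N,t}} \mathcal{U}_{\varphi_{N,t}} \xi_{N,t}\rangle$, so that the task reduces to understanding $\mathcal{U}_{\varphi_{N,t}}^* e^{is \mathcal{O}_{j,N,t}} \mathcal{U}_{\varphi_{N,t}}$. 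Since $\mathcal{O}_{j,N,t} = \frac{1}{\sqrt N}\sum_i (O_j^{(i)} - \langle\varphi_{N,t}, O_j\varphi_{N,t}\rangle)$ lifts to the Fock-space operator $\frac{1}{\sqrt N}\bigl(\mathrm{d}\Gamma(O_j) - \mathcal{N}\langle\varphi_{N,t},O_j\varphi_{N,t}\rangle\bigr)$ on the $N$-particle sector, conjugation by $\mathcal{U}_{\varphi_{N,t}}$ can be computed termwise from \eqref{eq:propU} by splitting $O_j = \langle\varphi_{N,t},O_j\varphi_{N,t}\rangle |\varphi_{N,t}\rangle\langle\varphi_{N,t}| + |\varphi_{N,t}\rangle\langle\varphi_{N,t}| O_j q_{N,t} + q_{N,t} O_j |\varphi_{N,t}\rangle\langle\varphi_{N,t}| + q_{N,t}O_j q_{N,t}$; the diagonal condensate piece cancels against the subtracted constant up to a term $\propto \mathcal{N}_+(t)/\sqrt N$, the off-diagonal pieces become $\frac{1}{\sqrt N}\bigl(b^*(q_{N,t}O_j\varphi_{N,t})\sqrt{N}/\sqrt N + \text{h.c.}\bigr) = \phi_b(q_{N,t}O_j\varphi_{N,t})$ up to the replacement of $\sqrt{N-\mathcal{N}_+(t)}$ implicit in the $b$-operators, and the $q q$ piece gives $\frac{1}{\sqrt N}\mathrm{d}\Gamma(q_{N,t}O_jq_{N,t})$.

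So the decomposition is $\mathcal{U}_{\varphi_{N,t}}^* e^{is_j \mathcal{O}_{j,N,t}} \mathcal{U}_{\varphi_{N,t}} = e^{is_j \phi_b(q_{N,t}O_j\varphi_{N,t})} + \mathcal{E}_j$ where $\mathcal{E}_j$ collects the error coming from the $\mathcal{N}_+(t)/\sqrt N$ correction and the $\frac{1}{\sqrt N}\mathrm{d}\Gamma(q_{N,t}O_jq_{N,t})$ term; more precisely I would write, via Duhamel in the parameter $s_j$,
\begin{align*}
e^{is_j \mathcal{A}_j} - e^{is_j \phi_b(q_{N,t}O_j\varphi_{N,t})} = i\int_0^{s_j} du \; e^{i(s_j-u)\mathcal{A}_j}\bigl(\mathcal{A}_j - \phi_b(q_{N,t}O_j\varphi_{N,t})\bigr) e^{iu\phi_b(q_{N,t}O_j\varphi_{N,t})}
\end{align*}
with $\mathcal{A}_j := \mathcal{U}_{\varphi_{N,t}}^* \mathcal{O}_{j,N,t}\mathcal{U}_{\varphi_{N,t}}$, and the operator $\mathcal{A}_j - \phi_b(q_{N,t}O_j\varphi_{N,t})$ is bounded relative to $(\mathcal{N}_+(t)+1)$ by $C\|O_j\|/\sqrt N$. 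Then I telescope: the difference of the two $k$-fold products is a sum over $m=1,\dots,k$ of terms in which factors $1,\dots,m-1$ are already in $\phi_b$-form, factor $m$ is the Duhamel error $\mathcal{E}_m$, and factors $m+1,\dots,k$ are still in $\mathcal{U}_{\varphi_{N,t}}^* e^{is\mathcal{O}}\mathcal{U}_{\varphi_{N,t}}$-form. For each such term I place one power of $(\mathcal{N}_+(t)+1)^{1/2}$ on either side of $\mathcal{E}_m$, absorb it into the adjacent unitaries using that $e^{iu\phi_b(h)}$ and $\mathcal{U}_{\varphi_{N,t}}^* e^{is\mathcal{O}}\mathcal{U}_{\varphi_{N,t}}$ change $\mathcal{N}_+(t)$-moments only mildly — this is exactly the content of \eqref{eq:phibN} and of the analogous statement \eqref{eq:phiaN} pulled back through $\mathcal{U}_{\varphi_{N,t}}$ — and finally use that $\xi_{N,t} = T_{N,t}^*\mathcal{U}_2(t;0)\Omega$ has all $\mathcal{N}$-moments finite by \eqref{eq:TNT*} and Lemma \ref{lemma:U2NU2^*}, so that $\langle\xi_{N,t},(\mathcal{N}_+(t)+1)^j\xi_{N,t}\rangle \le C$. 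Collecting the $u$-integrals produces the stated factor $|s_m|\,\|O_m\|\bigl(1+\sum_{j=m}^k s_j^2\|O_j\|^2\bigr)$, where the $s_j^2\|O_j\|^2$ growth for $j>m$ comes precisely from the right-moving factors $e^{is_j\phi_b(\cdots)}$ acting on the intermediate state and shifting $\mathcal{N}_+(t)$ by $\sim s_j^2\|O_j\varphi_{N,t}\|^2$ via \eqref{eq:phibN}.

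The main obstacle I anticipate is the bookkeeping in the telescoping step: one must track which exponentials sit to the left and to the right of the inserted error factor $\mathcal{E}_m$, and be careful that the left-moving unitaries $e^{i(s_j-u)\mathcal{A}_j}$ for $j<m$ and $e^{i(s_m-u)\mathcal{A}_m}$ (which are conjugates by $\mathcal{U}_{\varphi_{N,t}}$ of genuinely $N$-particle operators, hence only quasi-isometric on the moment scale) do not spoil the bound; here the uniformity in $N$ of the constants in \eqref{eq:phiaN}, combined with $\mathcal{N}\le N$ on the relevant sector and the precise form of \eqref{eq:propU}, is what saves the argument. A secondary, more technical point is that the $b$-operators carry the factor $\sqrt{(N-\mathcal{N}_+(t))/N}$, so strictly speaking $\mathcal{U}_{\varphi_{N,t}}^*\mathcal{O}_{j,N,t}\mathcal{U}_{\varphi_{N,t}}$ equals $\phi_b(q_{N,t}O_j\varphi_{N,t})$ only after also accounting for how this square-root factor differs from $1$; but that difference is itself $O(\mathcal{N}_+(t)/N)$ and can be folded into $\mathcal{E}_m$ without changing the final estimate. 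Everything else — the algebra of \eqref{eq:propU}, the relative bound $\|\mathcal{E}_m(\mathcal{N}_+(t)+1)^{-1/2}\|\le C\|O_m\|/\sqrt N$, and the moment propagation — is routine once this structure is in place.
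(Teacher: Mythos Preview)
Your approach is the paper's: compute the conjugation $\mathcal{U}_{\varphi_{N,t}}$-conjugation of $\mathcal{O}_{j,N,t}$, telescope, apply Duhamel in the $m$-th factor, bound the error operator by $\|O_m\|\,\mathcal{N}/\sqrt N$, and propagate the $\mathcal{N}$-moments through the remaining exponentials using \eqref{eq:phibN}, \eqref{eq:TNT*} and Lemma~\ref{lemma:U2NU2^*}. Two points deserve tightening.

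First, the conjugation identity is \emph{exact}: one has
\[
\mathcal{U}_{\varphi_{N,t}}\,\mathcal{O}_{j,N,t}\,\mathcal{U}_{\varphi_{N,t}}^{*}
= \phi_b\bigl(q_{N,t} O_j \varphi_{N,t}\bigr) + \tfrac{1}{\sqrt N}\,\rd\Gamma\bigl(q_{N,t}\widetilde{O}_{j,N,t}q_{N,t}\bigr),
\qquad \widetilde{O}_{j,N,t}=O_j-\langle\varphi_{N,t},O_j\varphi_{N,t}\rangle,
\]
with \emph{no} further remainder. The square-root factor $\sqrt{(N-\mathcal{N}_+)/N}$ you worry about is precisely what defines the $b$-operators \eqref{eq:modified c/a}, so the off-diagonal pieces become $\phi_b$ on the nose; and the ``$\mathcal{N}_+/\sqrt N$'' contribution you single out is already contained in the $\rd\Gamma(q\widetilde{O}q)$ term, since $q\widetilde{O}q=qOq-\langle\varphi,O\varphi\rangle q$ and $\rd\Gamma(q)=\mathcal{N}_+$. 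So there is a single error term, and no ``secondary technical point'' to fold in.

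Second, your telescoping order is internally inconsistent. You place the $\phi_b$-exponentials on the left ($j<m$) and the full $\mathcal{A}_j$-exponentials on the right ($j>m$), but then say the $\sum_{j=m}^k s_j^2$ growth comes from ``right-moving factors $e^{is_j\phi_b}$''. The paper uses the opposite order---$\mathcal{A}$-form on the left, pure $\phi_b$ on the right---so that after $\|\rd\Gamma(q\widetilde{O}_m q)\psi\|\le\|O_m\|\,\|\mathcal{N}\psi\|$ one only has to push $\mathcal{N}$ through the $\phi_b$-exponentials $e^{i(1-\tau)\phi_b(\cdot)}\prod_{j>m}e^{is_j\phi_b(\cdot)}$ via \eqref{eq:phibN}, which directly produces $(1+\sum_{j=m}^k s_j^2\|O_j\|^2)$; the left factors are absorbed by unitarity and never need a moment estimate. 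With your order you would have to propagate moments through $e^{is_j\mathcal{A}_j}$, which is possible but less clean and would not naturally give the stated index range.
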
 

\begin{proof}
We recall that for $f \in L^2( \mathbb{R}^3)$ we denote $\phi_b (f) = b^*(f) + b(f)$ with the modified creation and annihilation operators $b^*(f), b(f)$ defined in \eqref{eq:modified c/a}. 

In order to show Lemma \ref{lemma:step001}, we define for  $j \in \lbrace 1, \cdots , k \rbrace$
\begin{align*}
\widetilde{O}_{j,N,t} = O_j - \langle \varphi_{N,t}, O_j \varphi_{N,t} \rangle.
\end{align*}
We observe that 
\begin{align*}
\mathcal{O}_{j,N,t} = \frac{1}{\sqrt{N}}\left[ \rd \Gamma \left( q_ {N,t} \widetilde{O}_{j,N,t} q_{N,t} \right) + \rd \Gamma \left( p_ {N,t} O_{j} q_{N,t} \right)  + \rd \Gamma \left( q_ {N,t} O_{j} p_{N,t} \right) \right]
\end{align*}
where $p_{N,t} = | \varphi_{N,t} \rangle \langle \varphi_{N,t}\vert$ and $q_{N,t} = 1- p_{N,t}$. 
The properties \eqref{eq:propU} of the unitary $\mathcal{U}_{ \varphi_{N,t}}$ imply
\begin{align*}
\mathcal{U}_{\varphi_{N,t}}^* \mathcal{O}_{j,N,t}\mathcal{U}_{ \varphi_{N,t}} = \frac{1}{\sqrt{N}} \rd \Gamma \left( q_ {N,t} \widetilde{O}_{j,N,t} q_{N,t} \right) + \phi_b \left( q_{N,t} O_j \varphi_{N,t}\right).
\end{align*}
Hence,
\begin{align*}
\langle \mathcal{U}_{\varphi_{N,t}}& \xi_{N,t}, e^{is_1  \mathcal{O}_{1,N,t}} \dots  e^{is_k \mathcal{O}_{k,N,t}} \mathcal{U}_{\varphi_{N,t}} \xi_{N,t} \rangle \\
=& \langle \xi_{N,t}, \prod_{j=1}^k e^{is_j \left(\frac{1}{\sqrt{N}} \rd \Gamma \left( q_ {N,t} \widetilde{O}_{j,N,t} q_{N,t} \right) + \phi_b \left( q_{N,t} O_{j} \varphi_{N,t}\right)\right) } \xi_{N,t} \rangle . 
\end{align*}
We compute  
\begin{align*}
\langle \mathcal{U}_{\varphi_{N,t}}&  \xi_{N,t}, \;  e^{is_1  \mathcal{O}_{1,N,t}} \dots  e^{is_k \mathcal{O}_{k,N,t}} \mathcal{U}_{\varphi_{N,t}} \xi_{N,t}\rangle \\
&- \langle  \xi_{N,t},\;  e^{is_1  \phi_b\left( q_{N,t} O_{1} \varphi_{N,t}\right) } \dots  e^{is_k \phi_b \left( q_{N,t} O_{k} \varphi_{N,t}\right)}   \xi_{N,t}\rangle \\
=&  \sum_{m =1}^k \langle \xi_{N,t}, \; \prod_{j=1}^{m-1} e^{is_j \left(\frac{1}{\sqrt{N}} \rd \Gamma \left( q_ {N,t} \widetilde{O}_{j,N,t} q_{N,t} \right) + \phi_b \left( q_{N,t} O_j \varphi_{N,t}\right)\right) } \\
& \hspace{1.5cm} \times  \left( e^{is_m \left(\frac{1}{\sqrt{N}} \rd \Gamma \left( q_ {N,t} \widetilde{O}_{m,N,t} q_{N,t} \right) + \phi_b \left( q_{N,t} O_{m} \varphi_{N,t}\right) \right) } - e^{is_m  \phi_b \left( \qN O_m \vN \right) } \right) \\
& \hspace{3cm} \times \prod_{j=m+1}^k e^{i s_j  \phi_b( \qN  O_j \vN) } \xi_{N,t} \rangle .
\end{align*}
Using the fundamental theorem of calculus, we can write the difference as an integral
\begin{align*}
\langle \mathcal{U}_{\varphi_{N,t}}&  \xi_{N,t}, \;  e^{is_1  \mathcal{O}_{1,N,t}} \dots  e^{is_k \mathcal{O}_{k,N,t}} \mathcal{U}_{\varphi_{N,t}} \xi_{N,t}\rangle \\
&- \langle  \xi_{N,t},\;  e^{is_1  \phi_b\left( q_{N,t} O_{1} \varphi_{N,t}\right) } \dots  e^{is_k \phi_b \left( q_{N,t} O_{k} \varphi_{N,t}\right)}   \xi_{N,t}\rangle \\
=& \frac{1}{\sqrt{N}}\sum_{m=1}^k \int_0^{s_m} d\tau \; \langle \xi_{N,t}, \; \prod_{j=1}^{m-1} e^{is_j \left(\frac{1}{\sqrt{N}} \rd \Gamma \left( q_ {N,t} \widetilde{O}_{j,N,t} q_{N,t} \right) + \phi_b \left( q_{N,t} O_j \varphi_{N,t}\right)\right) } \\
& \hspace{2cm} \times  e^{i \tau \left(\frac{1}{\sqrt{N}} \rd \Gamma \left( q_ {N,t} \widetilde{O}_{m,N,t} q_{N,t} \right) + \phi_b \left( q_{N,t} O_{m} \varphi_{N,t}\right) \right) } \; \rd \Gamma \left( \qN \widetilde{O}_{m,N,t} \qN \right)  \\
& \hspace{3cm} \times e^{i (1-\tau)  \phi_b \left( \qN O_m \vN \right) }   \prod_{j=m+1}^k e^{i s_j  \phi_b( \qN  O_j \vN) } \xi_{N,t} \rangle  .
\end{align*}
The estimate $\| \rd \Gamma ( A) \psi \| \leq\| A \| \; \| \mathcal{N} \psi \| $ leads to 
\begin{align*}
\bigg\vert \langle \mathcal{U}_{\varphi_{N,t}} & \xi_{N,t}, \;  e^{is_1  \mathcal{O}_{1,N,t}} \dots  e^{is_k \mathcal{O}_{k,N,t}} \mathcal{U}_{\varphi_{N,t}} \xi_{N,t}\rangle \\
&- \langle  \xi_{N,t},\;  e^{is_1  \phi_b\left( q_{N,t} O_{1} \varphi_{N,t}\right) } \dots  e^{is_k \phi_b \left( q_{N,t} O_{k} \varphi_{N,t}\right)}   \xi_{N,t}\rangle\bigg\vert \\
\leq& \frac{1}{\sqrt{N}}\sum_{m=1}^k \int_0^{s_m} d\tau \;   \| \qN \widetilde{O}_{m,N,t} \qN \| \\
& \hspace{2cm} \times \| \mathcal{N} e^{i (1-\tau)  \phi_b \left( \qN O_m \vN \right) }   \prod_{j=m+1}^k e^{i s_j  \phi_b( \qN  O_j \vN) } \xi_{N,t}\|
\end{align*}
Since $\| \qN \widetilde{O}_{m,N,t} \qN \| \leq \| O_m \|$ and $\| \qN O_j \vN \| \leq \| O_j \|$, we find with \eqref{eq:phibN} 
\begin{align*}
\bigg\vert \langle& \mathcal{U}_{\varphi_{N,t}}  \xi_{N,t}, \;  e^{is_1  \mathcal{O}_{1,N,t}} \dots  e^{is_k \mathcal{O}_{k,N,t}} \mathcal{U}_{\varphi_{N,t}} \xi_{N,t}\rangle \\
&- \langle  \xi_{N,t},\;  e^{is_1  \phi_b\left( q_{N,t} O_{1} \varphi_{N,t}\right) } \dots  e^{is_k \phi_b \left( q_{N,t} O_{k} \varphi_{N,t}\right)}   \xi_{N,t}\rangle\bigg\vert \\
\leq& \frac{1}{\sqrt{N}}\sum_{m=1}^k \| O_m \| \int_0^{s_m} d\tau \;    \| \left( \mathcal{N}_+ +(1- \tau)^2 \|  O_m  \|^2 + \sum_{j=m+1}^k s_j^2 \|  O_j \|^2 + \alpha \right)  \xi_{N,t}\|
\end{align*}
for $\alpha \geq 1$. Recall that $\xi_{N,t} = T_{N,t}^* \mathcal{U}_2 (t;0) \Omega $. It follows from \eqref{eq:TNT*} and Lemma \ref{lemma:U2NU2^*} that
\begin{align*}
\langle \xi_{N,t}, \mathcal{N}_+^2 \xi_{N,t} \rangle \leq C
\end{align*}
for a constant $C>0$ uniform in $N$. Hence, 
\begin{align*}
\bigg\vert \langle \mathcal{U}_{\varphi_{N,t}}&  \xi_{N,t}, \;  e^{is_1  \mathcal{O}_{1,N,t}} \dots  e^{is_k \mathcal{O}_{k,N,t}} \mathcal{U}_{\varphi_{N,t}} \xi_{N,t}\rangle \\
&- \langle  \xi_{N,t},\;  e^{is_1  \phi_b\left( q_{N,t} O_{1} \varphi_{N,t}\right) } \dots  e^{is_k \phi_b \left( q_{N,t} O_{k} \varphi_{N,t}\right)}   \xi_{N,t}\rangle\bigg\vert \\
\leq& \frac{C}{\sqrt{N}}\sum_{m=1}^k |s_m|  \| O_m \| \left( 1 + \sum_{j=m}^k s_j^2 \|  O_j \|^2  \right).
\end{align*}
\end{proof}


\begin{lemma}[ Replace modified creation and annihilation operators with standard ones] Let $T_{N,t}$ and $\mathcal{U}_2(t;0)$ be as defined in \eqref{eq:def_bogo} resp. \eqref{eq:def_U2}. Moreover, let $\xi_{N,t} = T_{N,t}^* \mathcal{U}_{2}(t;0) \Omega$. Then, with the same notations as in Theorem \ref{thm:CLT}, there exists $C>0$ such that 
\begin{align*}
\bigg\vert \langle  \xi_{N,t} &,\;  e^{is_1  \phi_b\left( q_{N,t} O_{1} \varphi_{N,t}\right) } \dots  e^{is_k \phi_b \left( q_{N,t} O_{k} \varphi_{N,t}\right)}   \xi_{N,t} \rangle \\
&- \langle  \xi_{N,t},\;  e^{is_1  \phi_a\left( q_{N,t} O_{1} \varphi_{N,t}\right) } \dots  e^{is_k \phi_a \left( q_{N,t} O_{k} \varphi_{N,t}\right)}   \xi_{N,t}\rangle \bigg\vert  \\
\leq& \frac{C}{N} \sum_{m=1}^k \| O_m \| |s_m| \left(1 + \sum_{j=m}^ks_j^2  \| O_j \|  \right)^{3/2} .
\end{align*} 

\end{lemma}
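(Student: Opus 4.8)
The plan is to follow the telescoping scheme from the proof of Lemma \ref{lemma:step001}; the new input is that replacing the modified operators $\phi_b$ by the standard ones $\phi_a$ costs a full factor $N^{-1}$ rather than only $N^{-1/2}$. By \eqref{eq:modified c/a}, writing $d_N := \sqrt{(N-\mathcal{N}_+(t))/N}-1$, one has $\phi_b(f)-\phi_a(f) = a^*(f)\,d_N + d_N\,a(f)$. On the truncated Fock space $\mathcal{N}_+(t)\le N$, hence $\sqrt{(N-\mathcal{N}_+(t))/N}+1\ge 1$ and therefore
\begin{align*}
d_N = \frac{-\mathcal{N}_+(t)/N}{\sqrt{(N-\mathcal{N}_+(t))/N}+1}, \qquad \| d_N\,\psi\| \le \frac 1N\,\| \mathcal{N}_+(t)\,\psi\|.
\end{align*}
Since $d_N$ commutes with $\mathcal{N}_+(t)$ and $\| a^\sharp(f)\psi\|\le\| f\|\,\|(\mathcal{N}_+(t)+1)^{1/2}\psi\|$, this gives the pointwise bound
\begin{align*}
\| \left(\phi_b(f)-\phi_a(f)\right)\psi\| \le \frac{C\| f\|}{N}\,\| (\mathcal{N}_+(t)+1)^{3/2}\,\psi\|,
\end{align*}
which is the source of the extra half power of the number operator appearing in the claim.

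Next I would write the difference of the two ordered products of exponentials as a telescoping sum over $m=1,\dots,k$, the $m$-th summand carrying $e^{is_j\phi_b(q_{N,t}O_j\varphi_{N,t})}$ in the factors with $j<m$, the difference $e^{is_m\phi_b(f_m)}-e^{is_m\phi_a(f_m)}$ (with $f_m:=q_{N,t}O_m\varphi_{N,t}$, so $\| f_m\|\le C\| O_m\|$) in the $m$-th factor, and $e^{is_j\phi_a(f_j)}$ in the factors with $j>m$. By the fundamental theorem of calculus each such difference of exponentials equals $i\int_0^{s_m}d\tau\;e^{i\tau\phi_b(f_m)}\left(\phi_b(f_m)-\phi_a(f_m)\right)e^{i(s_m-\tau)\phi_a(f_m)}$. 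Moving $\phi_b(f_m)-\phi_a(f_m)$ to the right, using that the exponentials standing to its left are unitary together with the pointwise bound above, the $m$-th summand is bounded by
\begin{align*}
\frac{C\| O_m\|}{N}\int_0^{s_m}d\tau\; \Big\| (\mathcal{N}_+(t)+1)^{3/2}\,e^{i(s_m-\tau)\phi_a(f_m)}\prod_{j=m+1}^k e^{is_j\phi_a(f_j)}\,\xi_{N,t}\Big\|.
\end{align*}

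To control the remaining norm I would apply \eqref{eq:phiaN} (with $\delta=0$ and $k=3$) successively to peel off the $k-m+1$ factors $e^{i\sigma\phi_a(f_j)}$, each step turning $(\mathcal{N}_+(t)+1)^3$ into $(\mathcal{N}_+(t)+1+C\sigma^2\| O_j\|^2)^3$; since the number of factors is fixed, this together with $\langle\xi_{N,t},(\mathcal{N}_+(t)+1)^3\xi_{N,t}\rangle\le C$ yields
\begin{align*}
\Big\| (\mathcal{N}_+(t)+1)^{3/2}\,e^{i(s_m-\tau)\phi_a(f_m)}\prod_{j=m+1}^k e^{is_j\phi_a(f_j)}\,\xi_{N,t}\Big\| \le C\Big(1+\sum_{j=m}^k s_j^2\| O_j\|^2\Big)^{3/2}.
\end{align*}
Here the uniform bound on the third moment of $\mathcal{N}_+(t)$ against $\xi_{N,t}=T_{N,t}^*\mathcal{U}_2(t;0)\Omega$ follows from \eqref{eq:TNT*} and Lemma \ref{lemma:U2NU2^*}, using Lemma \ref{lemma:U2} to see that $\mathcal{U}_2(t;0)$ is of the form in Definition \ref{def:U_gen}. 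Performing the $\tau$-integration, with $(s_m-\tau)^2\le s_m^2$, and summing over $m$ then gives the asserted estimate.

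The argument presents no real obstacle beyond careful bookkeeping; the point that must be handled correctly is the power counting. The gain of $N^{-1}$ over the $N^{-1/2}$ of Lemma \ref{lemma:step001} comes precisely from $\sqrt{1-\mathcal{N}_+(t)/N}-1$ being of order $\mathcal{N}_+(t)/N$, i.e.\ genuinely of order $N^{-1}$; the price is one extra half power of the number operator, which is why the final bound carries the exponent $3/2$ and why one needs the third moment $\langle\xi_{N,t},(\mathcal{N}_+(t)+1)^3\xi_{N,t}\rangle$ to be under control.
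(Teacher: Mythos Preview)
Your proposal is correct and follows essentially the same approach as the paper: the telescoping decomposition, the Duhamel representation of each difference $e^{is_m\phi_b(f_m)}-e^{is_m\phi_a(f_m)}$, the key bound $|\sqrt{1-\mathcal{N}_+/N}-1|\le \mathcal{N}_+/N$ yielding the factor $N^{-1}(\mathcal{N}_++1)^{3/2}$, and the control of the third moment via \eqref{eq:phiaN}, \eqref{eq:TNT*} and Lemma \ref{lemma:U2NU2^*} all match the paper's argument line by line.
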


\begin{proof}
Recall that 
$$\phi_a (f) = a^*(f) + a(f)$$
with the standard creation and annihilation operators $a^*(f), a(f)$ while 
$$\phi_b (f) = b^*(f) + b(f)$$
with the modified creation and annihilation operators defined in \eqref{eq:modified c/a}. To this end, we compute 
\begin{align*}
\langle  \xi_{N,t}&,\;  e^{is_1  \phi_b\left( q_{N,t} O_{1} \varphi_{N,t}\right) } \dots  e^{is_k \phi_b \left( q_{N,t} O_{k} \varphi_{N,t}\right)}   \xi_{N,t}\rangle \\
&- \langle  \xi_{N,t},\;  e^{is_1  \phi_a\left( q_{N,t} O_{1} \varphi_{N,t}\right) } \dots  e^{is_k \phi_a \left( q_{N,t} O_{k} \varphi_{N,t}\right)}   \xi_{N,t}^{(1)}\rangle  \\
=& \sum_{m=1}^k \langle \xi_{N,t}, \; \prod_{j=1}^{m-1} e^{i s_j \phi_b \left( q_{N,t} O_{j} \varphi_{N,t}\right) } \\
& \hspace{2cm} \times \left( e^{i s_m \phi_b \left( q_{N,t} O_{m} \varphi_{N,t}\right) } - e^{i s_m \phi_a \left( q_{N,t} O_{m} \varphi_{N,t}\right) }   \right) \\
& \hspace{3cm} \times  \prod_{j=m+1}^{k} e^{i s_j \phi_a \left( q_{N,t} O_{j} \varphi_{N,t}\right) } \xi_{N,t} \rangle  \\
=& \sum_{m=1}^k \int_0^{s_m} d\tau \; \langle \xi_{N,t}, \;  \prod_{j=1}^{m-1} e^{i s_j \phi_b \left( q_{N,t} O_{j} \varphi_{N,t}\right) } e^{i \tau \phi_b \left( q_{N,t} O_{m} \varphi_{N,t}\right) } \\
& \hspace{3cm} \times \left( \phi_b \left( q_{N,t} O_{m} \varphi_{N,t}\right) - \phi_a\left(  q_{N,t} O_{m} \varphi_{N,t}\right) \right) \\
& \hspace{4cm} \times e^{i (1-\tau) \phi_a \left( q_{N,t} O_{m} \varphi_{N,t}\right) }\prod_{j=m+1}^{k} e^{i s_j \phi_a \left( q_{N,t} O_{j} \varphi_{N,t}\right) } \xi_{N,t} \rangle .
\end{align*} 
By definition of the modified creation and annihilation operators \eqref{eq:modified c/a} we obtain
\begin{align*}
\langle  \xi_{N,t}&,\;  e^{is_1  \phi_b\left( q_{N,t} O_{1} \varphi_{N,t}\right) } \dots  e^{is_k \phi_b \left( q_{N,t} O_{k} \varphi_{N,t}\right)}   \xi_{N,t}\rangle \\
&- \langle  \xi_{N,t},\;  e^{is_1  \phi_a\left( q_{N,t} O_{1} \varphi_{N,t}\right) } \dots  e^{is_k \phi_a \left( q_{N,t} O_{k} \varphi_{N,t}\right)}   \xi_{N,t}\rangle  \\
=& \sum_{m=1}^k \int_0^{s_m} d\tau \; \langle \xi_{N,t}^{(1)}, \;  \prod_{j=1}^{m-1} e^{i s_j \phi_b \left( q_{N,t} O_{j} \varphi_{N,t}\right) } e^{i \tau \phi_b \left( q_{N,t} O_{m} \varphi_{N,t}\right) } \\
& \hspace{3cm} \times  a^* \left( q_{N,t} O_{m} \varphi_{N,t}\right) \left(\sqrt{1 - \mathcal{N}_+/N}  -1 \right) \\
& \hspace{4cm} \times e^{i (1-\tau) \phi_a \left( q_{N,t} O_{m} \varphi_{N,t}\right) }\prod_{j=m+1}^{k} e^{i s_j \phi_a \left( q_{N,t} O_{j} \varphi_{N,t}\right) } \xi_{N,t} \rangle \\
&+ \sum_{m=1}^k \int_0^{s_m} d\tau \; \langle \xi_{N,t}, \;  \prod_{j=1}^{m-1} e^{i s_j \phi_b \left( q_{N,t} O_{j} \varphi_{N,t}\right) } e^{i \tau \phi_b \left( q_{N,t} O_{m} \varphi_{N,t}\right) } \\
& \hspace{3cm} \times  \left(\sqrt{1 - \mathcal{N}_+/N}  -1 \right) a \left( q_{N,t} O_{m} \varphi_{N,t}\right)  \\
& \hspace{4cm} \times e^{i (1-\tau) \phi_a \left( q_{N,t} O_{m} \varphi_{N,t}\right) }\prod_{j=m+1}^{k} e^{i s_j \phi_a \left( q_{N,t} O_{j} \varphi_{N,t}\right) } \xi_{N,t} \rangle .
\end{align*} 
Since $\| a^*(f) \xi \| \leq \|f \|_2 \; \| ( \mathcal{N} +1)^{1/2} \xi \|$ resp. $\| a(f) \xi \| \leq \|f \|_2 \; \| \mathcal{N}^{1/2} \xi \|$ and $\| \qN O_m \vN \|_2 \leq \| O_m \|$, we find 
\begin{align*}
\bigg\vert \langle  \xi_{N,t}&,\;  e^{is_1  \phi_b\left( q_{N,t} O_{1} \varphi_{N,t}\right) } \dots  e^{is_k \phi_b \left( q_{N,t} O_{k} \varphi_{N,t}\right)}   \xi_{N,t}\rangle \\
&- \langle  \xi_{N,t},\;  e^{is_1  \phi_a\left( q_{N,t} O_{1} \varphi_{N,t}\right) } \dots  e^{is_k \phi_a \left( q_{N,t} O_{k} \varphi_{N,t}\right)}   \xi_{N,t}\rangle \bigg\vert  \\
\leq& \frac{2}{N} \sum_{m=1}^k \| O_m \| \; \int_0^{s_m} d\tau \;   \| ( \mathcal{N} +1)^{3/2} e^{i (1-\tau) \phi_a \left( q_{N,t} O_{m} \varphi_{N,t}\right) }\prod_{j=m+1}^{k} e^{i s_j \phi_a \left( q_{N,t} O_{j} \varphi_{N,t}\right) } \xi_{N,t}  \| .
\end{align*} 
Now, Lemma \ref{lemma:U2NU2^*} together with \eqref{eq:phiaN} and \eqref{eq:TNT*} implies
\begin{align*}
\bigg\vert \langle  \xi_{N,t}&,\;  e^{is_1  \phi_b\left( q_{N,t} O_{1} \varphi_{N,t}\right) } \dots  e^{is_k \phi_b \left( q_{N,t} O_{k} \varphi_{N,t}\right)}   \xi_{N,t}\rangle \\
&- \langle  \xi_{N,t},\;  e^{is_1  \phi_a\left( q_{N,t} O_{1} \varphi_{N,t}\right) } \dots  e^{is_k \phi_a \left( q_{N,t} O_{k} \varphi_{N,t}\right)}   \xi_{N,t}\rangle \bigg\vert  \\
\leq& \frac{C}{N} \sum_{m=1}^k \| O_m \| |s_m| \left(1 + \sum_{j=m}^ks_j^2  \| O_j \|  \right)^{3/2} .
\end{align*} 

\end{proof}


\begin{lemma}[Replace modified Hartree equation with non-linear Schr\"odinger equation] Let $T_{N,t}$ and $\mathcal{U}_2(t;0)$ be as defined in \eqref{eq:def_bogo} resp. \eqref{eq:def_U2}. Moreover, let  $\xi_{N,t} = T_{N,t}^* \mathcal{U}_{2}(t;0) \Omega$. Then, with the same notations as in Theorem \ref{thm:CLT}, there exists $C>0$ such that 
\begin{align*}
\bigg\vert \langle  \xi_{N,t},\; & e^{is_1  \phi_a\left( q_{N,t} O_{1} \varphi_{N,t}\right) } \dots  e^{is_k \phi_a \left( q_{N,t} O_{k} \varphi_{N,t}\right)}   \xi_{N,t}\rangle - \langle  \xi_{N,t}^{(1)},\;  e^{is_1  \phi_a\left( q_t O_{1} \varphi_{t}\right) } \dots  e^{is_k \phi_a \left( q_{t} O_{k} \varphi_{t}\right)}   \xi_{N,t}^{(1)}\rangle\bigg\vert  \\
\leq& CN^{-\gamma} \sum_{m=1}^k |s_m| \; \| O_m \| \left( 1 + \sum_{j=m}^k s_j^2 \| O_j \|^2 \right)^{1/2} \exp \left( \exp \left( C |t| \right) \right).
\end{align*}
\end{lemma}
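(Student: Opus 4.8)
Recall that $q_{N,t}=1-|\varphi_{N,t}\rangle\langle\varphi_{N,t}|$ and $q_t=1-|\varphi_t\rangle\langle\varphi_t|$. The plan is to swap, one exponential at a time, the test function $q_{N,t}O_m\varphi_{N,t}$ for $q_tO_m\varphi_t$, following exactly the scheme of the two preceding lemmata. The only genuinely new ingredient is an $L^2$-stability estimate for the test functions. Writing $w_m := q_{N,t}O_m\varphi_{N,t}-q_tO_m\varphi_t$ and adding and subtracting the intermediate terms $O_m(\varphi_{N,t}-\varphi_t)$, $\langle\varphi_{N,t},O_m\varphi_{N,t}\rangle(\varphi_{N,t}-\varphi_t)$ and $(\langle\varphi_{N,t},O_m\varphi_{N,t}\rangle-\langle\varphi_t,O_m\varphi_t\rangle)\varphi_t$, I would get
\begin{align*}
\|w_m\|_2\leq \|O_m\|\,\big(1+\|\varphi_{N,t}\|_2^2+(\|\varphi_{N,t}\|_2+\|\varphi_t\|_2)\|\varphi_t\|_2\big)\,\|\varphi_{N,t}-\varphi_t\|_2 ,
\end{align*}
and then invoke \eqref{eq:approx phi} together with the regularity bounds \eqref{eq:regularity phi} (which control $\|\varphi_{N,t}\|_2$ and $\|\varphi_t\|_2$) to conclude $\|w_m\|_2\leq C\|O_m\|N^{-\gamma}\exp(C\exp(C|t|))$ with $\gamma=\min\{\beta,1-\beta\}$.

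Next, I would write the difference of the two products of exponentials as a telescoping sum over $m=1,\dots,k$: in the $m$-th summand the factors with index $<m$ still carry $\varphi_{N,t}$, the factors with index $>m$ already carry $\varphi_t$, and the $m$-th factor is the one being replaced. For fixed $m$, the fundamental theorem of calculus gives
\begin{align*}
e^{is_m\phi_a(q_{N,t}O_m\varphi_{N,t})}-e^{is_m\phi_a(q_tO_m\varphi_t)} = i\int_0^{s_m}d\tau\; e^{i\tau\phi_a(q_{N,t}O_m\varphi_{N,t})}\,\phi_a(w_m)\,e^{i(s_m-\tau)\phi_a(q_tO_m\varphi_t)} .
\end{align*}
Using unitarity of all exponentials to the left of $\phi_a(w_m)$ and the elementary bound $\|\phi_a(w_m)\psi\|\leq 2\|w_m\|_2\|(\mathcal{N}+1)^{1/2}\psi\|$, the $m$-th summand is bounded by
\begin{align*}
2\|w_m\|_2\int_0^{s_m}d\tau\;\Big\|(\mathcal{N}+1)^{1/2}\,e^{i(s_m-\tau)\phi_a(q_tO_m\varphi_t)}\prod_{j=m+1}^k e^{is_j\phi_a(q_tO_j\varphi_t)}\,\xi_{N,t}\Big\| .
\end{align*}

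To control the remaining Fock-space norm I would apply \eqref{eq:phiaN} (with $\delta=0$) successively to the exponentials $e^{is_j\phi_a(q_tO_j\varphi_t)}$ for $j=m,\dots,k$, using $\|q_tO_j\varphi_t\|_2\leq\|O_j\|\,\|\varphi_t\|_2\leq C\|O_j\|$ from \eqref{eq:regularity phi}; since each application only adds to the estimate a term proportional to $1+s_j^2\|O_j\|^2$ times $\|\xi_{N,t}\|^2$ (and $k$ is fixed, so the accumulated constant stays bounded), this produces
\begin{align*}
\Big\|(\mathcal{N}+1)^{1/2}\,e^{i(s_m-\tau)\phi_a(q_tO_m\varphi_t)}\prod_{j=m+1}^k e^{is_j\phi_a(q_tO_j\varphi_t)}\,\xi_{N,t}\Big\|^2 \leq C\Big(\langle\xi_{N,t},(\mathcal{N}+1)\xi_{N,t}\rangle + 1 + \sum_{j=m}^k s_j^2\|O_j\|^2\Big) .
\end{align*}
Finally, since $\xi_{N,t}=T_{N,t}^*\mathcal{U}_2(t;0)\Omega$, the bound \eqref{eq:TNT*} together with Lemma \ref{lemma:U2NU2^*} gives $\langle\xi_{N,t},(\mathcal{N}+1)\xi_{N,t}\rangle\leq C\exp(C\exp(C|t|))$. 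Inserting the estimate for $\|w_m\|_2$, performing the $\tau$-integration, using $\sqrt{a+b}\leq\sqrt a+\sqrt b$ to separate the time-dependent prefactor from $(1+\sum_{j=m}^k s_j^2\|O_j\|^2)^{1/2}$, and summing over $m$ yields the claimed bound.

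I expect no conceptual obstacle: the argument is structurally identical to the two previous lemmata, and the one new point, the $L^2$-stability bound for $q_{N,t}O_m\varphi_{N,t}-q_tO_m\varphi_t$, is an immediate consequence of \eqref{eq:approx phi}. The only things to be careful about are the correct accumulation of the $s_j^2\|O_j\|^2$ terms when iterating \eqref{eq:phiaN} through the $k-m+1$ exponentials, and keeping the auxiliary parameter $\alpha\geq 1$ in \eqref{eq:phiaN} valid at each step (the $\alpha$ produced by one application, namely $\alpha+s^2\|f\|^2$, is again $\geq 1$, so this is automatic).
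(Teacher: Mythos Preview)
Your proof is correct and follows essentially the same approach as the paper's: the same telescoping decomposition, the same application of the fundamental theorem of calculus, the same $L^2$-bound on $q_{N,t}O_m\varphi_{N,t}-q_tO_m\varphi_t$ via \eqref{eq:approx phi}, and the same control of the remaining Fock-space norm through \eqref{eq:phiaN}, \eqref{eq:TNT*} and Lemma~\ref{lemma:U2NU2^*}. Your version is in fact slightly more carefully written (e.g.\ the exponent $s_m-\tau$ rather than the paper's typo $1-\tau$, and the correct placement of $\varphi_t$ in the $j>m$ factors of the telescope).
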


\begin{proof}
By linearity of the operator $\phi_a (f)$, we compute
\begin{align*}
\langle  \xi_{N,t},\; & e^{is_1  \phi_a\left( q_{N,t} O_{1} \varphi_{N,t}\right) } \dots  e^{is_k \phi_a \left( q_{N,t} O_{k} \varphi_{N,t}\right)}   \xi_{N,t}\rangle - \langle  \xi_{N,t},\;  e^{is_1  \phi_a\left( q_t O_{1} \varphi_{t}\right) } \dots  e^{is_k \phi_a \left( q_{t} O_{k} \varphi_{t}\right)}   \xi_{N,t}\rangle  \\
=& \sum_{m=1}^k \langle  \xi_{N,t},\; \prod_{j=1}^{m-1} e^{is_j  \phi_a\left( q_{N,t} O_{j} \varphi_{N,t}\right) } \left(  e^{is_m  \phi_a\left( q_{N,t} O_{m} \varphi_{N,t}\right) }  -  e^{is_m  \phi_a\left( q_{t} O_{m} \varphi_{t}\right) } \right)  \\
& \hspace{3cm} \times \prod_{j=m+1}^{k} e^{is_j \phi_a\left( q_{N,t} O_{j} \varphi_{N,t}\right) } \xi_{N,t} \rangle \\
=& \sum_{m=1}^k \int_0^{s_m} d\tau \; \langle  \xi_{N,t},\; \prod_{j=1}^{m-1} e^{is_j  \phi_a\left( q_{N,t} O_{j} \varphi_{N,t}\right) }    \\
&\hspace{2.7cm} \times e^{i\tau   \phi_a\left( q_{N,t} O_{m} \varphi_{N,t}\right) }  \phi_a \left( q_{N,t} O_m \varphi_{N,t} - q_t O_m \varphi_t \right)  e^{i (1-\tau) \phi_a\left( q_{t} O_{m} \varphi_{t}\right) } \\
& \hspace{4cm} \times \prod_{j=m+1}^{k} e^{is_j \phi_a\left( q_{N,t} O_{j} \varphi_{N,t}\right) } \xi_{N,t} \rangle \\
\end{align*}
As
\begin{align*}
\| q_{N,t} O_m \varphi_{N,t} - q_t O_m \varphi_t \|_2 \leq \| O_m \| \left( \| q_{N,t} - q_t \|_2 + \| \varphi_{N,t} - \varphi_t \|_2 \right) \leq 2 \| O_m \| \| \varphi_{N,t} - \varphi_t \|
\end{align*}
the estimate \eqref{eq:approx phi} implies 
\begin{align*}
\| q_{N,t} O_m \varphi_{N,t} - q_t O_m \varphi_t \|_2 \leq C \| O_m \| N^{-\gamma} \exp \left( \exp \left( C |t| \right) \right)
\end{align*}
with $\gamma =\min \lbrace \beta, 1-\beta \rbrace$. Hence, the bound 
$$
\| \phi_a (f) \psi \| \leq 2 \| f \|_2 \; \| \left( \mathcal{N} +1\right)^{1/2} \psi \|
$$
leads to
\begin{align*}
\bigg\vert \langle  \xi_{N,t},\; & e^{is_1  \phi_a\left( q_{N,t} O_{1} \varphi_{N,t}\right) } \dots  e^{is_k \phi_a \left( q_{N,t} O_{k} \varphi_{N,t}\right)}   \xi_{N,t}\rangle - \langle  \xi_{N,t},\;  e^{is_1  \phi_a\left( q_t O_{1} \varphi_{t}\right) } \dots  e^{is_k \phi_a \left( q_{t} O_{k} \varphi_{t}\right)}   \xi_{N,t}\rangle\bigg\vert  \\
\leq& CN^{-\gamma} \exp \left( \exp \left( C |t| \right) \right) \sum_{m=1}^k \| O_m \| \int_0^{s_m} d\tau \;  \\
& \hspace{3cm} \times \| \left( \mathcal{N} +1\right)^{1/2} e^{i (1-\tau) \phi_a\left( q_{t} O_{m} \varphi_{t}\right) }   \prod_{j=m+1}^{k} e^{is_j \phi_a\left( q_{N,t} O_{j} \varphi_{N,t}\right) } \xi_{N,t} \|.
\end{align*}
We conclude again with Lemma \eqref{eq:phiaN}, Lemma \eqref{eq:TNT*} and Lemma \ref{lemma:U2NU2^*}
\begin{align*}
\bigg\vert \langle  \xi_{N,t},\; & e^{is_1  \phi_a\left( q_{N,t} O_{1} \varphi_{N,t}\right) } \dots  e^{is_k \phi_a \left( q_{N,t} O_{k} \varphi_{N,t}\right)}   \xi_{N,t}\rangle - \langle  \xi_{N,t}^{(1)},\;  e^{is_1  \phi_a\left( q_t O_{1} \varphi_{t}\right) } \dots  e^{is_k \phi_a \left( q_{t} O_{k} \varphi_{t}\right)}   \xi_{N,t}^{(1)}\rangle\bigg\vert  \\
\leq& CN^{-\gamma} \sum_{m=1}^k |s_m| \; \| O_m \| \left( 1 + \sum_{j=m}^k s_j^2 \| O_j \|^2 \right)^{1/2} \exp \left( \exp \left( C |t| \right) \right).
\end{align*}
\end{proof}


\begin{lemma}[Action of $T_{N,t}$] Let $T_{N,t}$ and $\mathcal{U}_2(t;0)$ be as defined in \eqref{eq:def_bogo} resp. \eqref{eq:def_U2}. Moreover, let  $\xi_{N,t} = T_{N,t}^* \mathcal{U}_{2}(t;0) \Omega$ and $\xi_{t} =T_{N,t} \xi_{N,t}= \mathcal{U}_{2,t} \Omega$. Then, using the same notations as in Theorem \ref{thm:CLT}, there exists $C>0$ such that 
\begin{align*}
\bigg\vert \langle & \xi_{N,t},\;  e^{is_1  \phi_a\left( q_t O_{1} \varphi_{t}\right) } \dots  e^{is_k \phi_a \left( q_{t} O_{k} \varphi_{t}\right)}   \xi_{N,t}\rangle - \langle \xi_t, e^{is_1  \phi_a\left( h_{1,t}\right) } \dots  e^{is_k \phi_a \left( h_{k,t}\right)} \xi_t \rangle \bigg\vert \\
& \leq C N^{-\gamma} \sum_{m=1}^k |s_m| \| O_m \| \left( 1 + \sum_{j=m}^k s_j^2 \| O_j \|^2 \right)^{1/2} \exp \left( \exp \left( C |t \right) \right) .
\end{align*}
with $h_{j,t} = \cosh( \eta_t )  q_t O \varphi_t + \sinh ( \eta_t ) \overline{q_t O_j \varphi_t}$ and $\eta_t$ as defined in \eqref{eq:def eta}.
\end{lemma}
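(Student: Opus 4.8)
The plan is to exploit that $\xi_{N,t}=T_{N,t}^*\xi_t$ together with the explicit action of the Bogoliubov transformation $T_{N,t}$. Inserting $T_{N,t}T_{N,t}^*=\mathds{1}$ between consecutive exponentials and using that $T_{N,t}$ is unitary, the first inner product becomes $\langle\xi_t,\prod_{j=1}^k e^{is_j\,T_{N,t}\phi_a(q_tO_j\varphi_t)T_{N,t}^*}\xi_t\rangle$. From the action \eqref{eq:action_bogo} of $T_{N,t}$ on creation and annihilation operators, together with the additivity of $a^*(\cdot)$ and $a(\cdot)$, one obtains for every $f\in L^2(\mathbb{R}^3)$ the identity $T_{N,t}\phi_a(f)T_{N,t}^*=\phi_a(\cosh_{\eta_{N,t}}f+\sinh_{\eta_{N,t}}\overline f)$. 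Applying this with $f=q_tO_j\varphi_t$ shows that the first inner product equals $\langle\xi_t,\prod_{j=1}^k e^{is_j\phi_a(h_{j,N,t})}\xi_t\rangle$ with $h_{j,N,t}=\cosh_{\eta_{N,t}}q_tO_j\varphi_t+\sinh_{\eta_{N,t}}\overline{q_tO_j\varphi_t}$. Since $h_{j,t}$ is exactly $h_{j,N,t}$ with $\eta_{N,t}$ replaced by $\eta_t$, the Lemma reduces to controlling $\langle\xi_t,\prod_j e^{is_j\phi_a(h_{j,N,t})}\xi_t\rangle-\langle\xi_t,\prod_j e^{is_j\phi_a(h_{j,t})}\xi_t\rangle$ in terms of $\|h_{m,N,t}-h_{m,t}\|_2$.

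For this I would telescope exactly as in the preceding Lemmata: by the fundamental theorem of calculus the difference is $\sum_{m=1}^k\int_0^{s_m}d\tau$ of a Fock-space expectation in which the exponentials of index $<m$ carry $h_{j,N,t}$, those of index $>m$ carry $h_{j,t}$, and the interpolating middle factor contributes $\phi_a(h_{m,N,t}-h_{m,t})$ sandwiched between $e^{i\tau\phi_a(h_{m,N,t})}$ and $e^{i(s_m-\tau)\phi_a(h_{m,t})}$. Using $\|\phi_a(f)\psi\|\le 2\|f\|_2\|(\mathcal{N}+1)^{1/2}\psi\|$, discarding the unitary exponentials of index $<m$ on the bra side, and then pushing $(\mathcal{N}+1)^{1/2}$ through the exponentials of index $\ge m$ by repeated use of \eqref{eq:phiaN} (recall $\|h_{j,t}\|_2\le C\|O_j\|$ from \eqref{eq:estimates_k,sh,p,r,mu}) followed by Lemma \ref{lemma:U2NU2^*} applied to $\xi_t=\mathcal{U}_2(t;0)\Omega$, the difference is bounded by $C\exp(\exp(C|t|))\sum_{m=1}^k|s_m|\,\|h_{m,N,t}-h_{m,t}\|_2\,(1+\sum_{j=m}^k s_j^2\|O_j\|^2)^{1/2}$.

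It remains to prove $\|h_{m,N,t}-h_{m,t}\|_2\le C\|O_m\|\,N^{-\gamma}\exp(\exp(C|t|))$, which follows from the absolutely convergent series \eqref{eq:def_cosh}. Since the $n=0$ term of $\cosh$ cancels in the difference, $\cosh_{\eta_{N,t}}-\cosh_{\eta_t}=\sum_{n\ge1}\frac{1}{(2n)!}\big((\eta_{N,t}\overline{\eta}_{N,t})^n-(\eta_t\overline{\eta}_t)^n\big)$ and similarly for $\sinh_{\eta_{N,t}}-\sinh_{\eta_t}$; writing each difference of powers as a telescoping sum and using $\|\eta_{N,t}\|_2,\|\eta_t\|_2\le C$ from \eqref{eq:estimates_k,sh,p,r,mu} (and the analogue for $\eta_{N,t}$, cf.\ \eqref{eq:approx eta}), the $n$-th term has Hilbert--Schmidt norm $\le\frac{2nC^{2n-1}}{(2n)!}\|\eta_{N,t}-\eta_t\|_2$, so the series converge and $\|\cosh_{\eta_{N,t}}-\cosh_{\eta_t}\|_2,\|\sinh_{\eta_{N,t}}-\sinh_{\eta_t}\|_2\le C\|\eta_{N,t}-\eta_t\|_2$. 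Inserting $\|\eta_{N,t}-\eta_t\|_2\le CN^{-\gamma}\exp(C\exp(C|t|))$ from \eqref{eq:approx eta} and $\|q_tO_m\varphi_t\|_2\le\|O_m\|\,\|\varphi_t\|_2\le C\|O_m\|$ yields the claim, and substituting into the bound of the previous paragraph finishes the proof.

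The step I expect to be the main obstacle is organizational rather than conceptual: carrying the $\exp(\exp(C|t|))$ factors cleanly through the double telescoping --- over the index $m$ of the exponentials and over the powers appearing in the $\cosh/\sinh$ expansions --- while making sure the $N^{-\gamma}$ gain enters only through $\|\eta_{N,t}-\eta_t\|_2$ and that the ``frozen'' norms $\|h_{j,t}\|_2$ remain $N$-independent. No genuinely new estimate beyond \eqref{eq:action_bogo}, \eqref{eq:def_cosh}, \eqref{eq:estimates_k,sh,p,r,mu}, \eqref{eq:approx eta}, \eqref{eq:phiaN} and Lemma \ref{lemma:U2NU2^*} is required.
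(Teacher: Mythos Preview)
Your proof is correct and follows the paper's approach exactly: conjugate by $T_{N,t}$ to pass from $\xi_{N,t}$ to $\xi_t$ and from $q_tO_j\varphi_t$ to $h_{j,N,t}$, then telescope over $m$ and control $\|h_{m,N,t}-h_{m,t}\|_2$ via $\|\eta_{N,t}-\eta_t\|_2$ and \eqref{eq:approx eta}. The only difference is in this last step: the paper appeals to the hyperbolic subtraction formulas $\cosh a-\cosh b=2\sinh\frac{a+b}{2}\sinh\frac{a-b}{2}$, $\sinh a-\sinh b=2\cosh\frac{a+b}{2}\sinh\frac{a-b}{2}$ (somewhat informal for non-commuting operator arguments), whereas your term-by-term telescoping of the series \eqref{eq:def_cosh} is the cleaner and fully rigorous way to obtain the same bound.
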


\begin{proof}
We compute using the properties \eqref{eq:action_bogo} of the Bogoliubov transformation
\begin{align*}
T_{N,t}^* \phi_a \left( q_t O_j \varphi_t \right) T_{N,t} = \phi_a \left( \cosh( \eta_{N,t} )  q_t O_j \varphi_t + \sinh ( \eta_{N,t} ) \overline{q_t O_j \varphi_t} \right). 
\end{align*}
with $\eta_{N,t}$ as defined in \eqref{eq:def eta}. In the following we denote $h_{j,N,t} =\cosh( \eta_{N,t} )  q_t O \varphi_t + \sinh ( \eta_{N,t} ) \overline{q_t O_j \varphi_t}  $. Since 
\begin{align*}
 \langle  \xi_{N,t},\;  e^{is_1  \phi_a\left( q_t O_{1} \varphi_{t}\right) } \dots  e^{is_k \phi_a \left( q_{t} O_{k} \varphi_{t}\right)}   \xi_{N,t}\rangle =  \langle  \xi_{t},\;  e^{is_1  \phi_a\left( h_{1,N,t } \right) } \dots  e^{is_k \phi_a \left( h_{k,N,t}\right)}   \xi_{t} \rangle
\end{align*}
we need to consider
\begin{align*}
 \langle  \xi_{t},\;  e^{is_1  \phi_a\left(h_{1,N,t } \right) } \dots  e^{is_k \phi_a \left( h_{k,N,t}\right)}   \xi_{t} \rangle - \langle  \xi_{t},\;  e^{is_1  \phi_a\left( h_{1,t } \right) } \dots  e^{is_k \phi_a \left( h_{k,t}\right)}   \xi_{t} \rangle .
\end{align*}
We observe using \eqref{eq:estimates_nabla_eta}
\begin{align*}
\| h_{j,N,t} - h_{j,t} \|_2 \leq& \| O_m \| \left( \| \cosh( \eta_t) - \cosh (\eta_{N,t} )\|_2 + \| \sinh (\eta_t ) - \sinh ( \eta_{N,t} \|_2  \right) \\
\leq& 2 \| O_m \| \cosh(( \eta_{N,t} + \eta_t)/2) \; \sinh (( \eta_{N,t} - \eta_t)/2) \|_2 \\
&+ 2 \| O_m \| \sinh( (\eta_{N,t} + \eta_t)/2) \; \sinh ( (\eta_{N,t} - \eta_t)/2) \|_2 \\
\leq& C \| O_m \| \| \eta_{N,t} - \eta_t \|_2.
\end{align*}
Thus, the estimate \eqref{eq:approx eta} leads to
\begin{align*}
\| h_{j,N,t} - h_{j,t} \|_2 \leq& C N^{-\gamma } \exp \left( \exp \left( C |t| \right) \right) .
\end{align*}
Using $\| h_{j,t} \|_2 \leq C \| O_j \| $, the same arguments as in step 3 lead to
\begin{align*}
\bigg\vert \langle & \xi_{N,t},\;  e^{is_1  \phi_a\left( q_t O_{1} \varphi_{t}\right) } \dots  e^{is_k \phi_a \left( q_{t} O_{k} \varphi_{t}\right)}   \xi_{N,t}\rangle - \langle \xi_t, e^{is_1  \phi_a\left(h_{1,t}\right) } \dots  e^{is_k \phi_a \left( h_{k,t}\right)} \xi_t \rangle \bigg\vert \\
& \leq C N^{-\gamma} \sum_{m=1}^k |s_m| \| O_m \| \left( 1 + \sum_{j=m}^k s_j^2 \| O_j \|^2 \right)^{1/2} \exp \left( \exp \left( C |t \right) \right) .
\end{align*}
\end{proof}


\begin{lemma}[Computing the expectation value] 
\label{lemma:step005} Let $\mathcal{U}_2(t;0)$ be as defined in \eqref{eq:def_U2}. Let $\xi_t = \mathcal{U}_{2,t} \Omega$. Then, using the same notations as in Theorem \ref{thm:CLT}, 
\begin{align*}
 \langle \xi_t, &e^{is_1  \phi_a\left( h_{1,t}\right) } \dots  e^{is_k \phi_a \left( h_{k,t}\right)} \xi_t \rangle \\
 =& \frac{1}{\sqrt{(2\pi)^k \det \Sigma_t}} \int dx_1 \dots dx_k \; g_1( x_1) \dots g_k (x_k) \; e^{-\frac{1}{2} \sum_{i,j =1}^k \left(\Sigma_t \right)_{i,j}^{-1} x_i x_j}.
 \end{align*}
\end{lemma}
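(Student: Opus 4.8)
The plan is to exploit that $\xi_t=\mathcal{U}_2(t;0)\Omega$ is a mean‑zero quasi‑free state, so that the whole quantity reduces to a vacuum expectation of Weyl operators and thence to an elementary Gaussian computation. Note first that the right‑hand side does not depend on $s_1,\dots,s_k$: the identity is to be read, exactly as in the proof of Theorem~\ref{thm:CLT}, after integrating both sides against $\widehat g_1(s_1)\cdots\widehat g_k(s_k)\,ds_1\cdots ds_k$; it therefore suffices to establish the \emph{characteristic‑function identity}
\[
\langle \xi_t,\; e^{is_1\phi_a(h_{1,t})}\cdots e^{is_k\phi_a(h_{k,t})}\,\xi_t\rangle \;=\; \exp\Big(-\tfrac12\textstyle\sum_{i,j=1}^k(\Sigma_t)_{i,j}\,s_is_j\Big),
\]
and then to invoke the Gaussian Fourier inversion formula. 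To get there, I write $\langle \xi_t,\cdot\,\xi_t\rangle=\langle\Omega,\mathcal{U}_2^*(t;0)\,\cdot\,\mathcal{U}_2(t;0)\Omega\rangle$ and insert $\mathcal{U}_2\mathcal{U}_2^*=\mathds 1$ between the factors, so the product becomes $\prod_{j=1}^k e^{is_j\,\mathcal{U}_2^*(t;0)\phi_a(h_{j,t})\mathcal{U}_2(t;0)}$. Since $\phi_a(h)=A(h,\overline h)$ in the notation \eqref{eq:def_A}, Proposition~\ref{prop:bogo_U2} gives $\mathcal{U}_2^*(t;0)\phi_a(h_{j,t})\mathcal{U}_2(t;0)=A\big(\Theta(t;0)(h_{j,t},\overline h_{j,t})\big)$, and the intertwining relation $\mathcal{J}\Theta(t;0)=\Theta(t;0)\mathcal{J}$ from \eqref{eq:bogo_prop} forces $\Theta(t;0)(h_{j,t},\overline h_{j,t})=(\nu_{j,t},\overline{\nu_{j,t}})$ for a single vector $\nu_{j,t}\in L^2(\mathbb R^3)$, hence $A(\Theta(t;0)(h_{j,t},\overline h_{j,t}))=\phi_a(\nu_{j,t})$. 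Unpacking the block form of $\Theta(t;0)$ from Proposition~\ref{prop:bogo_U2}, inserting $h_{j,t}=\cosh_{\eta_t}q_tO_j\varphi_t+\sinh_{\eta_t}\overline{q_tO_j\varphi_t}$ and using the action \eqref{eq:action_bogo} of $T_t$ identifies $\nu_{j,t}$ with the vector \eqref{eq:covt}. Thus the left‑hand side equals $\langle\Omega,\prod_{j=1}^k e^{is_j\phi_a(\nu_{j,t})}\Omega\rangle$.

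Next I carry out the vacuum computation. Each commutator $[\phi_a(\nu_{i,t}),\phi_a(\nu_{j,t})]=\langle\nu_{i,t},\nu_{j,t}\rangle-\langle\nu_{j,t},\nu_{i,t}\rangle$ is a scalar, so the Baker--Campbell--Hausdorff identity for factors with central commutators gives
\[
\prod_{j=1}^k e^{is_j\phi_a(\nu_{j,t})}=\exp\Big(-\tfrac12\textstyle\sum_{i<j}s_is_j[\phi_a(\nu_{i,t}),\phi_a(\nu_{j,t})]\Big)\,\exp\Big(i\phi_a\big(\textstyle\sum_j s_j\nu_{j,t}\big)\Big).
\]
Using $\phi_a(f)\Omega=a^*(f)\Omega$ and the Weyl relation $e^{i\phi_a(f)}=e^{-\frac12\|f\|^2}e^{ia^*(f)}e^{ia(f)}$ one gets $\langle\Omega,e^{i\phi_a(\sum_j s_j\nu_{j,t})}\Omega\rangle=\exp(-\tfrac12\|\sum_j s_j\nu_{j,t}\|^2)$. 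Expanding $\|\sum_j s_j\nu_{j,t}\|^2=\sum_i s_i^2\|\nu_{i,t}\|^2+2\sum_{i<j}s_is_j\,\mathrm{Re}\langle\nu_{i,t},\nu_{j,t}\rangle$ and collecting it with the BCH phase, the total exponent collapses to $-\tfrac12\sum_i s_i^2\|\nu_{i,t}\|^2-\sum_{i<j}s_is_j\langle\nu_{i,t},\nu_{j,t}\rangle$, which by the symmetric definition of $\Sigma_t$ in Theorem~\ref{thm:CLT} equals $-\tfrac12\sum_{i,j=1}^k(\Sigma_t)_{i,j}s_is_j$. This proves the characteristic‑function identity.

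Finally I convert to the asserted integral form. Multiplying by $\prod_j\widehat g_j(s_j)$, integrating in $s$, substituting $\widehat g_j(s_j)=\frac1{2\pi}\int g_j(x_j)e^{-is_jx_j}\,dx_j$ and completing the square in the resulting Gaussian integral over $s$ yields precisely $\frac{1}{\sqrt{(2\pi)^k\det\Sigma_t}}\int dx_1\cdots dx_k\,g_1(x_1)\cdots g_k(x_k)\,e^{-\frac12\sum_{i,j}(\Sigma_t)^{-1}_{i,j}x_ix_j}$. The contour shift underlying the Gaussian integral is legitimate: $\Sigma_t$ is invertible by hypothesis, and $\mathrm{Re}\big(\sum_{i,j}(\Sigma_t)_{i,j}s_is_j\big)=\|\sum_j s_j\nu_{j,t}\|^2\ge 0$, which controls the magnitude of the integrand and, together with invertibility, makes the shifted integral and the determinant normalization well defined.

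I expect the main obstacle to be the bookkeeping in the first step: verifying carefully that the one‑particle vector obtained by first conjugating $\phi_a(q_tO_j\varphi_t)$ with $T_t$ through \eqref{eq:action_bogo} and then with $\mathcal{U}_2(t;0)$ through $\Theta(t;0)$ is \emph{exactly} $\nu_{j,t}$ as written in \eqref{eq:covt}. This requires tracking all complex conjugations, distinguishing $\overline V(t;0)$ from $V(t;0)$, handling the action of $J$ on $\cosh_{\eta_t}$ and $\sinh_{\eta_t}$, and pinning down the sign in the Bogoliubov action \eqref{eq:action_bogo}. Once that identification is in place, the remaining Gaussian algebra and Fourier inversion are routine.
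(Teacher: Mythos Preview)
Your proof is correct and follows essentially the same route as the paper's: conjugate by $\mathcal{U}_2(t;0)$ using Proposition~\ref{prop:bogo_U2} to reduce to the vacuum expectation of $\prod_j e^{is_j\phi_a(\nu_{j,t})}$, apply Baker--Campbell--Hausdorff together with the Weyl relation to obtain the Gaussian characteristic function $\exp\big(-\tfrac12\sum_{i,j}(\Sigma_t)_{i,j}s_is_j\big)$, and then Fourier invert. Your observation that the displayed identity is to be read after integrating against $\widehat g_1(s_1)\cdots\widehat g_k(s_k)$ is exactly how the paper concludes its own proof.
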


\begin{proof}
 Since $\xi_t = \mathcal{U}_{2,t} \Omega$, we are left with computing 
 \begin{align}
 \langle \xi_t, e^{is_1  \phi_a\left( h_{1,t}\right) } \dots  e^{is_k \phi_a \left( h_{k,t}\right)} \xi_t \rangle =&  \langle \Omega, \mathcal{U}_{2,t}^*  e^{is_1  \phi_a\left(h_{1,t}\right) } \dots  e^{is_k \phi_a \left( h_{k,t}\right)} \mathcal{U}_{2,t} \Omega \rangle .
 \label{eq:expectationvalue}
 \end{align}
As proven in Proposition \ref{prop:bogo_U2}, the unitary $\mathcal{U}_{2,t}$ gives rise to a Bogoliubov transformation. Hence, there exists bounded operators $U(t;0),V(t;0)$ on $L^2( \mathbb{R}^3)$ such that 
\begin{align*}
\mathcal{U}_{2,t}^* \phi_a\left(h_{j,t} \right) \mathcal{U}_{2,t} =\phi_a \left( U(t;0) h_{j,t} + \overline{V(t;0)} \overline{h}_{j,t} \right). 
\end{align*}
In the following we denote 
\begin{align*}
\nu_{j,t} =&U(t;0) h_{j,t} + \overline{V(t;0)} \overline{h}_{j,t} \\
=& \left( U(t;0) \cosh \eta_t + \overline{V(t;0)} \sinh \eta_t \right) q_t O_j \varphi_t \\
&+ \left( U(t;0) \sinh \eta_t + \overline{V(t;0)} \cosh \eta_t \right) \overline{q_t O_j \varphi_t }.
\end{align*}
Note, that the Baker Campbell Hausdorff formula implies on one hand
\begin{align*}
e^{i \phi_a (f)} e^{i \phi_a (g)} = e^{i \phi_a ( f + g) } e^{-i \text{Im} \langle f,g \rangle}
\end{align*}
for $f,g \in L^2( \mathbb{R}^3 )$, i.e.
\begin{align*}
\prod_{j=1}^k e^{i s_j \phi_a ( \nu_{j,t} ) } = e^{i \phi_a ( \nu_t )} \prod_{i<j}^k e^{-is_i s_j \text{Im} \langle \nu_{i,t},\nu_{j,t} \rangle } 
\end{align*}
with $\nu_t= \sum_{j=1}^k \nu_{j,t}$. On the other hand, the Baker Campbell Hausdorff formula applied to the creation and annihilation operator implies
\begin{align*}
\prod_{j=1}^k e^{i s_j \phi_a ( \nu_{j,t} ) } = e^{- \| \nu_t\|_2^2/2} e^{a^*(\nu_t)} e^{a(\nu_t)} \prod_{i<j}^k e^{-is_i s_j \text{Im} \langle \nu_{i,t}, \nu_{j,t} \rangle } .
\end{align*}
Hence, we write the expectation value \eqref{eq:expectationvalue} as
\begin{align*}
\langle \xi_{t}, e^{i \phi_a ( h_{1,t})} \dots e^{i s_k \phi_a (h_{k,t})} \xi_{t} \rangle &= e^{- \| \nu_t \|_2^2}\prod_{i<j}^k e^{-is_i s_j \text{Im} \langle \nu_{i,t}, \nu_{j,t} \rangle } \langle \Omega, e^{a^*(\nu_t) } e^{a(\nu_t)} \Omega \rangle \\
&= e^{- \| \nu_t \|_2^2}\prod_{i<j}^k e^{-is_i s_j \text{Im} \langle \nu_{i,t}, \nu_{j,t} \rangle }.
\end{align*}
Let $\Sigma_t \in \mathbb{C}^{k \times k}$ be given through
\begin{align*}
\left( \Sigma_{t}\right)_{i,j} = \begin{cases}
\langle \nu_{i,t}, \nu_{j,t}\rangle, &\text{if} \hspace{0.2cm} i<j \\
\langle \nu_{j,t}, \nu_{i,t}\rangle,&  \text{otherwise}
\end{cases} 
\end{align*}
then,
\begin{align*}
&\langle \xi_{t}, e^{i s_1 \phi_a ( h_{1,t})} \dots e^{i s_k \phi_a (h_{k,t})} \xi_{t} \rangle = e^{-\frac{1}{2} \sum_{i,j=1}^k \left( \Sigma_t \right) _{i,j} s_is_j}.
\end{align*}
By assumption, the matrix $\Sigma_t $ is invertible. Hence,
\begin{align*}
&\int ds_1 \dots ds_k \; \widehat{g}_1 (s_1) \dots \widehat{g}_k (s_k) \; \langle \xi_{t}, e^{i s_1 \phi_a (h_{1,t})} \dots e^{i s_k \phi_a (h_{k,t})} \xi_{t} \rangle   \\
=&  \int ds_1 \dots ds_k \; \widehat{g}_1 (s_1) \dots \widehat{g}_k (s_k) \;e^{-\frac{1}{2} \sum_{i,j=1}^k \left(\Sigma_t \right)_{i,j} s_i s_j} \\
=& \frac{1}{\sqrt{(2\pi)^k \det \Sigma_t}} \int dx_1 \dots dx_k \; g_1( x_1) \dots g_k (x_k) \; e^{-\frac{1}{2} \sum_{i,j =1}^k \left(\Sigma_t \right)_{i,j}^{-1} x_i x_j} .
\end{align*}
\end{proof}

Summarizing the results from Lemma \ref{lemma:step001} to Lemma \ref{lemma:step005}, we finally obtain
\begin{align*}
\bigg\vert& \mathbb{E}_{\Psi_{N,t}} \left[ g_1( \mathcal{O}_{1,N,t}) \dots g_k ( \mathcal{O}_{k,N,t} ) \right] \\
& \hspace{2cm}- \frac{1}{\sqrt{(2\pi)^k \det \Sigma_t }} \int dx_1 \dots dx_k \; g_1( x_1) \dots g_k (x_k) \; e^{-\frac{1}{2} \sum_{i,j =1}^k \left(\Sigma_t \right)_{i,j}^{-1} x_i x_j} \bigg\vert \\
& \leq C N^{-\gamma} \exp( C \exp (C|t|))  \\
& \hspace{1cm} \times \prod_{j=1}^k \int d\tau | \widehat{g}_j (\tau) \vert \left( 1 + |\tau|^2 \| O_j \|^2  +N^{\gamma-1/2} |\tau|^3 \| O_j \|^3 + N^{\gamma -1} | \tau |^4 \| O_j \|^4 \right) .
\end{align*}
with $\gamma = \min \lbrace \beta, 1- \beta \rbrace$.  This proves Theorem \ref{thm:CLT}. 

\vspace{1cm} 
\textit{\textbf{Acknowledgement.}} S.R. acknowledges partial support from the NCCR SwissMAP. This project has received funding from the European Union's Horizon 2020 research and innovation programme under the Marie Sk\l{}odowska-Curie Grant Agreement No. 754411.  \\
S.R. would like to thank Benjamin Schlein for many fruitful discussions. 
\vspace{1cm}

\end{document}